\newcommand{\iden}[1]{{\text{\usefont{U}{dsrom}{m}{n}1}}_{#1}}
\newcommand{\ident}{\iden{}}
\newcommand{\expr}[1]{\mathbb{E}_{#1}}
\newcommand{\exprL}{\expr{\mathcal{L}}}
\newcommand{\converse}{^{\mkern-1mu{}{\raise0.3ex\hbox{\tiny$\smallsmile$}}}\kern-0.1em{}}
\newcommand{\sem}[2]{\left\lBrack#1\right\rBrack_{#2}}
\newcommand{\WPC}[0]{Weak Pushout Chain}
\newcommand{\Wpc}[0]{Weak pushout chain}
\newcommand{\wpc}[0]{weak pushout chain}
\newcommand{\WPS}[0]{Weak Pushout Step}
\newcommand{\wps}[0]{weak pushout step}
\newcommand{\tr}[0]{sentence}
\newcommand{\Tr}[0]{Sentence}
\newcommand{\TR}[0]{Sentence}
\newcommand{\gr}[0]{graph rule}
\newcommand{\GR}[0]{Graph Rule}
\newcommand{\transL}[0]{\mathcal{E}_\mathcal{L}}
\tikzset{vertex/.style={draw,shape=circle,fill,inner sep=1pt}} 
\lstdefinestyle{custom}{
  belowcaptionskip=1\baselineskip,
  xleftmargin=0pt,
  basicstyle=\footnotesize\ttfamily,
  commentstyle=\itshape,
  columns=flexible,
}
\begin{document}

\title{Finding models through graph saturation} 
\author{Sebastiaan J. C. Joosten
}
\institute{Formal Methods and Tools group, University of Twente, the Netherlands \\
Email: \texttt{Sebastiaan.Joosten@utwente.nl}
}
\maketitle

\begin{abstract}
We give a procedure that can be used to automatically satisfy invariants of a certain shape.
These invariants may be written with the operations intersection, composition and converse over binary relations, and equality over these operations.
We call these invariants \tr{}s that we interpret over graphs.
For questions stated through sets of these sentences, this paper gives a semi-decision procedure we call graph saturation.
It decides entailment over these \tr{}s, inspired on graph rewriting.
We prove correctness of the procedure.
Moreover, we show the corresponding decision problem to be undecidable.
This confirms a conjecture previously stated by the author~\cite{amperspiegelRAMICS}.
\end{abstract}

This is an accepted preprint, which will be published in the Journal of Logical and Algebraic Methods in Programming (JLAMP).

\section{Introduction}

The question `what models does a set of formulas $\mathcal{T}$ have' has practical relevance, as it is an abstraction of an information system:
We interpret the data set stored in an information system at a certain point in time as a model, and each invariant of the system corresponds to a formula in $\mathcal{T}$.
This correspondence is the core idea behind languages such as Ampersand~\cite{AmpersandRAMICS}, that define an information system this way.
Users of an information system try to change the data set continually.
These changes might violate the constraints.
While Ampersand responds to such violations by rejecting the change, it would be convenient to automatically add data items such that all constraints are satisfied.
The question then becomes: what data items should be added?
We solve this question partially by means of a graph saturation procedure.

The question `does a set of formulas $\mathcal{T}$ have a model satisfying all formulas' essentially asks whether $\mathcal{T}$ is free of contradictions.
So far, we did not discuss the language in which we can write the formulas in $\mathcal{T}$.
Several interesting problems arise when restricting the language in which we can write formulas:
the satisfiability problem is obtained by restricting to disjunctions of positive and negative literals.
Restricting to linear integer equalities, we obtain the linear programming problem.
In this paper, we restrict those formulas to equalities over terms, in which terms are expressions of relations combined through the allegorical operations\footnote{These are $\fcmp$, $\sqcap$, $\converse{}$ and $\ident$. See the book by Freyd and Scedrov for details on allegories~\cite{FS90}.}.
We define \emph{\tr{}} to be a formula over the restricted language considered in this paper (Definition~\ref{def:tr}).

Our interest in this language stems from experience in describing systems in Ampersand.
All operations from relation-algebra are part of the Ampersand language.
The operations considered here include only the most frequently used subset of those operations.
Therefore, many of the formulas used in Ampersand will be sentences as considered in this work.
We therefore consider this work a step towards an Ampersand system that helps the user find models.


\subsection{Approach}
We give a short summary of the basic algorithm presented here, so we can better relate our approach to other literature, describe our contributions, and give the outline of this paper.
Italicised words in the next paragraph are defined later.

The algorithm aims to determine whether there is a particular \emph{model} for a set of \emph{\tr{}s}, say $\mathcal{T}$, and is guaranteed to terminate if no such \emph{model} exists.
It proceeds to construct a (possibly infinite) \emph{model} otherwise.
The procedure has two phases:
first, we translate the \emph{\tr{}s} in $\mathcal{T}$ into a set of \emph{\gr{}s}.
We then apply a saturation procedure on the \emph{\gr{}s}.
This procedure creates a \emph{chain} of \emph{graphs}, whose limit is a \emph{least consequence graph}.
A \emph{graph} contains a \emph{conflict} if it has an edge with the label $\bot{}$.
If a \emph{least consequence graph} contains a \emph{conflict}, then there is no \emph{model} for $\mathcal{T}$.
Otherwise, the \emph{least consequence graph} corresponds to a \emph{model} of $\mathcal{T}$, if the \emph{\gr{}s} correspond to $\mathcal{T}$ according to a straightforward \emph{translation}.
We abort the procedure as soon as a \emph{conflict} arises, because we can be sure that no \emph{models} for $\mathcal{T}$ exist in this case.
A second question we can answer through the same algorithm is that of \emph{entailment}:
\emph{entailment} is the question whether a \emph{\tr{}} $\phi$ follows from a set of \emph{\tr{}s} $\mathcal{T}$.

In an information system, a least consequence graph is a well suited to determine which data items to add:
If conflict free, it corresponds to a graph that maintains the invariants.
At the same, it only contains necessary consequences: it will not cause data items to be added that have nothing to do with the change the user made.

\subsection{Related Work}
We compare the work in this paper to existing work in two ways: work it is similar to in motivation, and work it is similar to in implementation from an abstract perspective.
In motivation, our research is closely related to the Alcoa tool, which we'll discuss first.
In approach, our methods are related to description logics and to graph rewriting, which we'll discuss second.

\paragraph{The Alcoa Tool.}
Our search for a reasoner for Ampersand is related to Alcoa~\cite{Jackson2000AlcoaTA}, which is the analyzer for Alloy~\cite{jackson2002alloy}, a language based on Z~\cite{Z}.
Like Ampersand, the languages Z and Alloy are based on relations.
Alloy is a simplification of Z: it reduces the supported operations to a set that is small yet powerful.
This paper differs from Alloy in the expressivity of its operations, however:
Alloy allows writing full first order formula's plus the Kleene-star, making it a language that is even more expressive than Ampersand.
We compare to Alcoa because this work is similar in purpose.

In Alloy, a user may write assertions, which are formulas that the user believes follow from the specification.
Alcoa tries to find counterexamples to those assertions, as well as a finite model for the entire specification.
Unfortunately, several properties of the Alcoa tool hinder our purposes in Ampersand:
Alcoa requires an upper bound on the size of (or number of elements in) the model.
It does not perform well if this bound is too large.
In a typical information system, the amount of data is well above what can be considered `too large'.
As an additional complication, we cannot adequately predict the size of the model we might require.
This is why we look at other methods for achieving similar goals.


\paragraph{Description Logics.}
We can regard our procedure as a way to derive facts from previously stated facts:
this is what happens in terms of \tr{}s between subsequent graphs in the chain we create.
So called description logics are languages used in conjunction with an engine, that gives a procedure to learn new facts from previously learned facts, using declarative statements (or rules) in the corresponding description logic.
For a good overview of description logics, see the book on that topic by Baader~\cite{baader2003description}.

A set of derivation rules is consistent if it has a model.
For a highly expressive description logic such as OWL DL, determining consistency is undecidable.
Still, a rule engine for OWL DL will happily try to learn new facts until a model is found.
Users of OWL DL typically need to ensure that the stated derivation rules together with the rule engine give a terminating procedure.
For many description logics, termination of its rule engine is syntactically guaranteed, and these logics are consequentially decidable.

The description logic for which the language and implementation is closest to our language is the logic $\mathcal{EL}$ and its extensions proposed by Baader et al~\cite{Baader:2005aa,EL++}.
Instead of using tableau-based procedures, as most description logics, it uses a saturation-based reasoner.
Syntax of the derivation rules is limited to ensure termination of any saturation procedure:
$\mathcal{EL}$ allows statements about unary relations using top, bottom, individual elements called `nominal', and conjunction.
Statements about binary relations use a different syntax, that can be translated into \tr{}s using composition, converse and the identity relation (but not necessarily vice-versa).
By modeling $\mathcal{EL}$'s unary relations as binary relations that are a subset of the identity relation, all of $\mathcal{EL}$ and its extensions can be expressed through the \tr{}s described in this paper.
In particular, the syntax of $\mathcal{EL}$ does not have disjunctions, thus eliminating the need for backtracking.
In fact, $\mathcal{EL}$ is designed such that its consistency can be decided deterministically in polynomial time.
Its extensions have different complexity bounds, but preserve polynomial runtime for the fragment that falls within $\mathcal{EL}$.

In our work, we do not work under the assumption of termination: neither the user or the syntax guarantees it.
This allows us to use a richer language than one that is syntactically guaranteed to terminate.
Despite this lack of termination, we do ensure termination in case of conflicts: a conflict will be found if our \tr{}s imply it.
This allows the user to approach certain problems through any set of rules within the grammar, rather than just those sets for which the implementation is guaranteed to terminate.
The implementation presented in our work applies \gr{}s `fairly' to ensure this.
Fair application of rules is typically not required in the implementation of description logic engines.

\paragraph{Graph Rewriting.}
A central concept in graph rewriting is that a pushout can be used to apply a \emph{\gr{}} on a graph, as described by Wolfram Kahl~\cite{kahl2001relation}.
The usual idea of such a pushout is that it models execution by removing a portion of the graph, and replacing it with the result of the execution step.
Graph rewriting might then terminate when no rules can be applied anymore.
Our approach diverges on this point: rather than execution, a step models learning a deducible conclusion.
Rather than terminating when no step is possible, we are interested in the limit of the sequence of graphs.
For this reason, the notions of weak pushout step and weak pushout don't coincide exactly:
we ensure that the sequence of graphs form a chain in order for the limit to exist.

The term saturation is borrowed from the saturation procedure in resolution procedures, introduced by Robinson in 1965~\cite{Robinson:1965aa}.
His procedure solves an entailment problem over a certain language.
As in his procedure, our procedure adds derivable facts iteratively.

\subsection{Contributions and Paper Outline}
We mentioned how this paper contributes by comparing it to related work:
Compared to the work on $\mathcal{EL}$, our approach allows \tr{}s in a richer language, and we present a translation to \gr{}s to separate the semantics from the core of the implementation.
Compared to the work on graph-rewriting, we present a new graph-based manipulation algorithm, and give an interpretation of those graphs as models for sets of \tr{}s.

We also relate the contribution of this paper to a paper presenting Amperspiegel~\cite{amperspiegelRAMICS}.
This earlier paper by the author conjectured that the problem whether no least consequence graph exists is undecidable.
It also contains a procedure for finding such graphs, which it conjectures to be correct.
We will show that the procedure in the paper is an instance of the variations of the procedure described here.
To simplify the presentation of our results in this paper, the definition of a least consequence graph is slightly different here:
A least consequence graphs always exist according to the definitions used in this paper.
In the terminology of this paper, the conjecture just mentioned would be: the problem whether a least consequence graph contains a conflict is undecidable.
This paper proves the stated result.

The procedure presented in this work is simpler than the one presented earlier.
However, the latter can be obtained by applying optimizations to the former.
We show correctness of the procedure, and show that the existence of a conflict free least consequence graph implies the existence of models for a set of sentences.
Semi-decidability of consistency is not surprising in this setting:
the logic we consider is less expressive than several logics for which semi-decidability is established.
Our contribution lies in presenting an intuitive, flexible, graph-based algorithm that does not use backtracking.

The outline of this paper is as follows:
we define the syntax and semantics of \tr{}s in Section~\ref{sec:background}, and define the problems our procedure aims to solve: deciding consistency and entailment.
Section~\ref{sec:graphrules} then introduces the heart of the procedure by defining least consequence graphs and indicating how to obtain them through \gr{}s.
Section~\ref{sec:Rules} connects these two, by giving a translation of \tr{}s to \gr{}s.
The procedure is given as an algorithm in Section~\ref{sec:procedure}, and we indicate how to use the procedure to decide consistency and entailment.
Before going to the conclusion, we indicate why we cannot hope to do better than giving a possibly non-terminating procedure, by proving undecidability in Section~\ref{sec:undecidable}.
Conclusion and acknowledgements are in Section~\ref{sec:conclusion}.

\section{Background and Problem Statement}\label{sec:background}
As this paper primarily deals with directed labeled graphs, we choose to use these graphs for the semantics of \tr{}s as well.
There is no fundamental difference between this presentation and the usual binary relation based semantics usually presented as the canonical allegory (or as the canonical model for relation algebra).
However, using graphs now simplifies our proofs later on, and makes it that we do not have to define them later.
Graphs are defined as follows:

\begin{definition}[Graph, Empty, Finite]
A directed labeled graph $G = (\mathcal{L},V,E)$ is given by a set of labels $\mathcal{L}$, a set of vertices $V$, and a set of edges $E \subseteq \mathcal{L} \times V \times V$.
The set of all graphs with labels $\mathcal{L}$ is written as $\mathbb{G}_\mathcal{L}$.
We write \emph{graph} when we mean a directed labeled graph.
We say that a graph is \emph{finite} if both its set of vertices $V$ and its set of edges $E$ are finite.
The cardinality of $V$ is written $|G|$.
A graph with no vertices (and therefore no edges) is called \emph{empty}, written $\mathbb{0}_\mathcal{L}$.
\end{definition}

Terms are built inductively from relation symbols $\mathcal{L}$, combined with the operations \ $\_\sqcap\_$, \ $\_\fcmp\_$, \ and \ $\_\converse$.
The operations stand for intersection, relational composition, and relational converse, respectively.
The set of all terms over $\mathcal{L}$ is denoted as $\exprL$.
We use the same letter $\mathcal{L}$ to indicate labels in graphs, as well as relation symbols in terms.
This notation is deliberately chosen because of the semantics given in Definition~\ref{def:semantics} below.
\begin{definition}[Semantics]\label{def:semantics}
For a graph $G=(\mathcal{L},V,E)$, the \emph{semantics} of a term $\mathbb{e} \in \exprL$, written as $\sem{\mathbb{e}}G \subseteq V \times V$, is as in representable relation algebra:
\begin{align*}
\sem{l}G &= \left\{(x,y) \mid (l,x,y) \in E \right\} \\
\sem{\mathbb{e}_1 \sqcap \mathbb{e}_2}G &= \sem{\mathbb{e}_1}G \cap \sem{\mathbb{e}_2}G \\
\sem{\mathbb{e} \converse{}\,}G &= \left\{ (y,x) \mid (x,y)\in\sem{\mathbb{e}}G\right\} \\
\sem{\mathbb{e}_1 \fcmp \mathbb{e}_2}G &= {\left\{ (x,y) \mid \exists z.\ (x,z)\in\sem{\mathbb{e}_1}G \ \wedge \ (z,y)\in\sem{\mathbb{e}_2}G\right\}}
\end{align*}
\end{definition}

A \tr{} is the proposition stating that two terms are equal:
\begin{definition}[\Tr{}, Holds]\label{def:tr}
Given the terms $\mathbb{e}_1,\mathbb{e}_2 \in \exprL$, the pair $(\mathbb{e}_1,\mathbb{e}_2)$ is a \emph{\tr{}}, written $\mathbb{e}_1 = \mathbb{e}_2$.
We write $\mathbb{e}_L \sqsubseteq \mathbb{e}_R$ for a \tr{} of the shape $\mathbb{e}_L = \mathbb{e}_L \sqcap \mathbb{e}_R$.
We say that a \tr{} \emph{holds} in graph $G$ if $\sem{\mathbb{e}_1}G = \sem{\mathbb{e}_2}G$, in which case we write: $G \vDash \mathbb{e}_1 = \mathbb{e}_2$.
If $\mathcal{T}$ is a set of sentences, we say that it holds in $G$ if each of the sentences holds in $G$, written $G \vDash \mathcal{T}$.
\end{definition}
\begin{lemma}\label{lem:subseteq}
Let $\mathbb{e}_1,\mathbb{e}_2 \in \exprL$, and $G\in\mathbb{G}_\mathcal{L}$.
\begin{align*}
G \vDash \mathbb{e}_1 \sqsubseteq \mathbb{e}_2 \quad &\Leftrightarrow \quad \sem{\mathbb{e}_1}G \subseteq \sem{\mathbb{e}_2}G\\
G \vDash \mathbb{e}_1 = \mathbb{e}_2 \quad &\Leftrightarrow \quad G \vDash \mathbb{e}_1 \sqsubseteq \mathbb{e}_2 \ \wedge \ G \vDash \mathbb{e}_2 \sqsubseteq \mathbb{e}_1
\end{align*}
\end{lemma}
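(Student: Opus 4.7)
The plan is to reduce both equivalences to elementary facts about set inclusion and intersection, using only Definitions~\ref{def:semantics} and~\ref{def:tr}.

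For the first equivalence, I would unfold the abbreviation introduced in Definition~\ref{def:tr}: $\mathbb{e}_1 \sqsubseteq \mathbb{e}_2$ is by definition the \tr{} $\mathbb{e}_1 = \mathbb{e}_1 \sqcap \mathbb{e}_2$. Thus $G \vDash \mathbb{e}_1 \sqsubseteq \mathbb{e}_2$ means $\sem{\mathbb{e}_1}G = \sem{\mathbb{e}_1 \sqcap \mathbb{e}_2}G$, which by the intersection clause of Definition~\ref{def:semantics} equals $\sem{\mathbb{e}_1}G \cap \sem{\mathbb{e}_2}G$. The step is then the elementary set-theoretic fact that, for subsets $A, B$ of any set, $A = A \cap B$ iff $A \subseteq B$. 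Applying this with $A = \sem{\mathbb{e}_1}G$ and $B = \sem{\mathbb{e}_2}G$ yields the claimed equivalence.

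For the second equivalence, I would first use Definition~\ref{def:tr} to rewrite $G \vDash \mathbb{e}_1 = \mathbb{e}_2$ as $\sem{\mathbb{e}_1}G = \sem{\mathbb{e}_2}G$, then invoke the standard fact that two sets are equal iff each is contained in the other, and finally translate both inclusions back into $\sqsubseteq$-sentences using the first part of the lemma.

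No step poses a real obstacle; the whole argument is a direct unfolding of definitions combined with the identities $A = A \cap B \Leftrightarrow A \subseteq B$ and $A = B \Leftrightarrow A \subseteq B \wedge B \subseteq A$. The only mild care needed is to keep the rewriting mechanical and in the right order, since $\sqsubseteq$ is itself defined in terms of $=$, so the proof of (1) must be carried out before it is reused inside the proof of (2).
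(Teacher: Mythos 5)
Your proof is correct: unfolding $\sqsubseteq$ as $\mathbb{e}_1 = \mathbb{e}_1 \sqcap \mathbb{e}_2$, applying the semantics of $\sqcap$, and using $A = A \cap B \Leftrightarrow A \subseteq B$ together with mutual inclusion is precisely the argument the paper relies on, which it omits as an immediate consequence of Definitions~\ref{def:semantics} and~\ref{def:tr}. Your remark about establishing the first equivalence before reusing it in the second is a sensible point of care, but raises no difficulty.
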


We deviate slightly from allegories:
First, we are working in an untyped setting, or put differently: in an allegory with only a single object.
In `typed allegories', allegories with more than one object, relational composition is a partial operation.
This deviation is not fundamental: we are simply adding more terms to our language than would be there in the typed setting.
A second deviation is that we have not introduced identity morphisms.
We introduce the identity symbol $\ident{}$ by treating it as a symbol in $\mathcal{L}$.
Our approach generalizes to multiple identity symbols, as one would expect in allegories with multiple objects, but this is out of scope in favor of a simplified presentation.

Apart from the identity symbol $\ident{}$, we also introduce bottom and top ($\bot$ and $\top$) as symbols in $\mathcal{L}$.
In Definition~\ref{def:standard} we give the interpretation of these designated relation symbols, defining a graph as standard if it adheres to this interpretation.

\begin{definition}[Standard]\label{def:standard}
We say that a set of labels $\mathcal{L}$ is \emph{standard} with the (possibly empty) set of constant elements $\mathcal{C}$ if $\bot,\top,\ident \in \mathcal{L}$ and $\mathcal{C}\subseteq\mathcal{L}$.
We refer to elements in $\mathcal{C}$ simply as constants.
Let $\mathcal{L}$ be a standard set of labels with the constants $\mathcal{C}$.
A graph $G = (\mathcal{L},V,E)$ is called \emph{standard} if $V\neq \{\}$, and:
\begin{align*}
\sem{\ident}{G} &= \{(x,x) \mid x\in V\} \\
\sem{\top}G &= \{(x,y) \mid x,y\in V\} \\
\sem{\bot}G &= \{\} \\
\forall c\in\mathcal{C}.\ \sem{c}G &= \{(c,c)\}
\end{align*}
\end{definition}

This work looks at models for $\mathcal{T}$, and investigates whether $\mathcal{T}$ entails $\phi$.
We can now give the definitions that necessary to make this precise.
\begin{definition}[Model, Consistent]\label{def:model}
Let $\mathcal{T}$ be a set of \tr{}s over a standard set of labels $\mathcal{L}$ (with constants $\mathcal{C}$).
We say that the graph $G\in \mathbb{G}_\mathcal{L}$ is a \emph{model} for $\mathcal{T}$ if every \tr{} in $\mathcal{T}$ holds in $G$ and $G$ is standard.
We say that $\mathcal{T}$ is \emph{consistent} if such a graph exists.
We may refer to any set of \tr{}s $\mathcal{T}$ as an instance of the consistency problem.
\end{definition}
\begin{definition}[Entails]\label{def:entails}
Let $\mathcal{T}$ be a set of \tr{}s over a standard set of labels $\mathcal{L}$, and let $\phi$ be a \tr{} over $\mathcal{L}$.
We say that $(\mathcal{T},\phi)$ is an instance of the \emph{entailment problem}.
We say that $\mathcal{T}$ \emph{entails} $\phi$ if for all standard graphs G, $G \vDash \mathcal{T}$ implies $G \vDash \phi$.
\end{definition}

Our use of `standard' in these definitions is not a restriction:
given a graph $G$ over a language $\mathcal{L}$ with $\bot,\top,\ident \not\in \mathcal{L}$, we can make it into a standard graph $G'$ over $\mathcal{L} \cup \{\bot,\top,\ident\}$, choosing the constants $\mathcal{C}=\{\}$, and adding the edges according to Definition~\ref{def:standard}.
Then $G \vDash \phi$ if and only if $G' \vDash \phi$ for $\phi \in \exprL$, as $\phi$ cannot talk about $\bot,\top$ or $\ident$.

We prove a straightforward correspondence between the consistency problem and the entailment problem:
\begin{lemma}
There is a standard graph $G$ such that $G \vDash \mathcal{T}$ if and only if $\mathcal{T}$ does not entail $\bot = \top$.
\end{lemma}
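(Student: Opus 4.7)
The plan is to prove the two directions separately, using essentially the vacuous-truth trick in one direction and a direct computation in the other.

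For the forward direction, suppose a standard graph $G$ satisfies $G \vDash \mathcal{T}$. I need to show $G$ witnesses that $\mathcal{T}$ does not entail $\bot = \top$, which amounts to showing $G \not\vDash \bot = \top$. By Definition~\ref{def:standard}, $\sem{\bot}{G} = \emptyset$ while $\sem{\top}{G} = V \times V$, and since a standard graph has $V \neq \emptyset$, the set $V \times V$ is also nonempty. Hence $\sem{\bot}{G} \neq \sem{\top}{G}$, so the \tr{} $\bot = \top$ fails in $G$. Because $G$ is a standard graph on which $\mathcal{T}$ holds but $\bot = \top$ does not, Definition~\ref{def:entails} tells us $\mathcal{T}$ does not entail $\bot = \top$.

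For the backward direction, I would argue by contraposition. Assume no standard graph satisfies $\mathcal{T}$; then for every standard graph $G$ the antecedent $G \vDash \mathcal{T}$ of the implication in Definition~\ref{def:entails} is false, so the implication $G \vDash \mathcal{T} \Rightarrow G \vDash \bot = \top$ holds vacuously. Therefore $\mathcal{T}$ entails $\bot = \top$.

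There is no real obstacle here: the whole argument rests on two small observations, namely that $\sem{\top}{G}$ is nonempty in any standard graph (which is precisely why the requirement $V \neq \emptyset$ was built into Definition~\ref{def:standard}), and that entailment is a universally quantified implication, hence vacuously true when the premise is unsatisfiable. The only thing that requires slight care is making the role of the nonemptiness condition explicit, since without it the lemma would fail — on the empty graph both $\sem{\bot}{G}$ and $\sem{\top}{G}$ would be empty, so $\bot = \top$ would hold there and the witness in the forward direction would collapse.
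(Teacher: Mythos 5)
Your proof is correct and follows essentially the same route as the paper: one direction uses that a standard graph has a vertex $v$, so $(v,v)\in\sem{\top}{G}$ but $(v,v)\notin\sem{\bot}{G}$, hence $\bot=\top$ fails and the graph witnesses non-entailment; the other direction is the vacuous-entailment argument when $\mathcal{T}$ has no standard model. The only difference is that you phrase the first direction directly where the paper states its contrapositive, which is immaterial.
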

\begin{proof}
We first prove that if $\mathcal{T}$ entails $\bot = \top$, then there is no standard graph with $G \vDash \mathcal{T}$:
A standard graph must have at least one vertex, say $v$.
Then $(v,v)\in\sem{\top}G$, and $(v,v)\not\in\sem{\bot}G$, so $\sem{\bot}G \neq \sem{\top}G$.
For the other direction:
Suppose there is no standard graph with $G \vDash \mathcal{T}$, then entailment of any formula follows by definition.\qed
\end{proof}

We proceed with a small example of sentences, an entailment and a consistency problem.
As an example, we make an administration of people and rooms.
We use the label ${\tt i}$ to denote which room a person Inhabits, and ${\tt r}$ to denote which people are Roommates.
We think of the labels in terms of their semantics: as binary relations.
We show how these relations are connected by the sentence expressing:
Two people are roommates if and only if they share a room: ${\tt r} = {\tt i}\fcmp{\tt i}\converse{}$.
This gives a one-sentence theory $\mathcal{T} = \{{\tt r} = {\tt i}\fcmp{\tt i}\converse{}\}$ on a standard set of labels that contains $\tt i$ and $\tt r$.

We ask ourselves if being a roommate is a transitive relation.
That is, does $\mathcal{T}$ entail ${\tt r}\fcmp{\tt r} \sqsubseteq {\tt r}$ or not?
The answer is negative.
A possible counter-example our procedure may produce is a graph $G$ with:
\begin{align*}
\sem{\tt i}G &= \{(0,3),(1,4),(2,3),(2,4)\} \\
\sem{\tt r}G &= \{(0,0),(0,2),(1,1),(1,2),(2,0),(2,1),(2,2)\}
\end{align*}
In this example, $0,1,2$ are people, and $3,4$ are their rooms.
While $0$ and $1$ are roommates of $2$, $0$ is not a roommate of $1$.
Note that person $2$ has two rooms in this example.
We may wish to forbid this: the sentence ${\tt i}\converse{} \fcmp {\tt i} \sqsubseteq \ident{}$ expresses that $\tt i$ is univalent (if two rooms are inhabited by the same person, those two must be the same room).
Now $\mathcal{T} = \{{\tt r} = {\tt i}\fcmp{\tt i}\converse{}\ ,\ {\tt i}\converse{} \fcmp {\tt i} \sqsubseteq \ident{}\}$ entails ${\tt r}\fcmp{\tt r} \sqsubseteq {\tt r}$, and our procedure shows this, as we will demonstrate in Section~\ref{sec:procedure}.

We elaborate on the same example for checking consistency, and add some constants to $\mathcal{L}$.
Let $\mathcal{C} = \{\texttt{`Liz'},\texttt{`Jon'},\texttt{`Batcave'},\texttt{`Room 11'}\}$.
Let's say we want Liz and Jon to be roommates, and ask ourselves if that's possible.
That is, we wish to solve the consistency problem for:
\begin{align*}
\mathcal{T} = \{\ & {\tt r} = {\tt i}\fcmp{\tt i}\converse{}\\
,\ &{\tt i}\converse{} \fcmp {\tt i} \sqsubseteq \ident{}\\
,\ &\texttt{`Liz'} \fcmp \top \fcmp \texttt{`Jon'} \sqsubseteq {\tt r} \}
\end{align*}
Our procedure then produces a graph like $G$ with:
\begin{align*}
\sem{\tt i}G &= \{(\texttt{`Liz'},0),(\texttt{`Jon'},0)\} \\
\sem{\tt r}G &= \{(\texttt{`Liz'},\texttt{`Jon'})\}
\end{align*}
Without going into details on why, we remark that our procedure comes up with a new room, here called $0$, even with the Batcave and Room 11 available.
Finally, if we require Liz and Jon to be in their rooms of their choice, the Batcave and Room 11 respectively, our procedure detects that the requirements are no longer consistent.
That is, the following theory is not consistent:
\begin{align*}
\mathcal{T} = \{\ & {\tt r} = {\tt i}\fcmp{\tt i}\converse{}\ ,\ {\tt i}\converse{} \fcmp {\tt i} \sqsubseteq \ident{}\\
,\ &\texttt{`Liz'} \fcmp \top \fcmp \texttt{`Jon'} \sqsubseteq {\tt r}\\
,\ &\texttt{`Liz'} \fcmp \top \fcmp \texttt{`Batcave'} \sqsubseteq {\tt i}\\
,\ &\texttt{`Jon'} \fcmp \top \fcmp \texttt{`Room 11'} \sqsubseteq {\tt i}
\}
\end{align*}

\section{\GR{}s and Consequence Graphs}\label{sec:graphrules}
This section defines a least consequence graph, and gives conditions on a chain of graphs that ensure that its limit is a least consequence graph.
When a graph is a least consequence graph, we can use it to answer both the entailment problem and the consistency problem.
The conditions on a chain of graphs tell us how \gr{}s should be applied by possible implementations.
Basically `least consequence graph' characterises that all \gr{}s are applied correctly and sufficiently.
We define \gr{}s in Definition~\ref{def:gr}, least consequence graphs in Definition~\ref{def:lcg}, and conclude the section with the conditions that give us a least consequence graph, proven in Lemma~\ref{lem:fairchain} and~\ref{lem:pushoutchain}.

We introduce special notation for two basic operations on graphs: relabeling of vertices, and taking the union of two graphs.
Suppose we have a function $f : V_1 \to V_2$, where $V_1$ is the set of vertices of some graph.
We can apply the function on the corresponding graph, written $\hat{f}$:
\[ \hat{f}((\mathcal{L},V_1,E)) = \left(\mathcal{L},\{f(v) \mid v\in V_1\},\{(l,f(x),f(y)) \mid (l,x,y) \in E\}\right)\]
For taking the union of two graphs, we simply write $\cup$, defined as follows:
\[(\mathcal{L}_1,V_1,E_1) \cup (\mathcal{L}_2,V_2,E_2) = (\mathcal{L}_1 \cup \mathcal{L}_2,V_1 \cup V_2,E_1 \cup E_2)\]
This leads to a natural definition of subgraph:

\begin{definition}[Subgraph]
We say that $G_1$ \emph{is a subgraph of} $G_2$ if $G_1 \cup G_2 = G_2$.
It follows that a subgraph of a finite graph is again finite.
If $G_1$ is a subgraph of $G_2$ and $G_1,G_2\in\mathbb{G}_\mathcal{L}$ for some $\mathcal{L}$, we write $G_1 \xrightarrow{\subseteq} G_2$.
\end{definition}
In this article, we consider the set of labels $\mathcal{L}$ to be arbitrary but fixed.
The relation `subgraph' forms a complete lattice over $\mathbb{G}_\mathcal{L}$, which justifies the following definition:
\begin{definition}[Chain, Supremum]
Given a set of labels $\mathcal{L}$.
We say that $S : \mathbb{N} \rightarrow \mathbb{G}_\mathcal{L}$ is a \emph{chain} if for all $i \in \mathbb{N}$, $S(i)$ is a subgraph of $S(i+1)$.
The union of all graphs in a chain, written $S(\infty)$, is called the \emph{supremum}, defined as $S(\infty) = (\mathcal{L},\bigcup_{i} E_i,\bigcup_{i} V_i)$ with $S(i) = (\mathcal{L},E_i,V_i)$.
\end{definition}

The way we use graph rewriting is most closely related to the single-pushout rewriting found in the literature (e.g.~\cite{kahl2001relation}).
In this approach, \gr{}s are related through a morphism that is, for instance, a partial function.
Vertices in the left hand side of the rule not related to the right hand side get removed upon application of the rule.
Similarly, vertices on the right hand side get inserted.
In our setting, we need the application of a rule to form a chain.
To make sure that we can do this, we use `subgraph' as a condition on \gr{}s.
\begin{definition}[\GR{}]\label{def:gr}
A pair of graphs $(L,R)$ is called a \emph{\gr{}} if $L$ is a subgraph of~$R$, and $R$ is finite.
We say that a set $\mathcal{R}$ is a \emph{set of \gr{}s with labels} $\mathcal{L}$ if each $(L,R)\in\mathcal{R}$ is a \gr{}, and $L,R\in \mathbb{G}_\mathcal{L}$.
\end{definition}

We proceed by giving an example of a \gr{}, and do so visually.
A graph can be drawn in the usual way.
Figure~\ref{fig:graph} is an example of a graph with $k,l,m\in\mathcal{L}$.
A picture does not specify the set of labels $\mathcal{L}$, only the set of edges and the set of vertices.
An example of a \gr{} is given in Figure~\ref{fig:rulpair} and~\ref{fig:rulcompact}.
Using the subgraph condition allows us to draw a \gr{} $(L,R)$ in a single figure, using small dots for nodes in $R$ but not in $L$, and big dots and solid edges for what is in $L$, and therefore in $R$.
\begin{figure}[btp]
\centering
\begin{subfigure}[b]{0.2\textwidth}\centering
\begin{tikzpicture}[-latex,semithick]
\node[vertex,label=above:$a$](1) {}; 
\node[vertex,label=above:$b$](2) [right =of 1] {};
\node[vertex,label=below:$c$](3) [below =of 2] {};
\path (1) edge[bend left] node[label=above:$k$] {} (2);
\path (3) edge[bend left] node[label=below:$m$] {} (1);
\path (2) edge node[label=right:$l$] {} (3);
\path (1) edge [loop left,distance=1cm,in=135,out=-135] node[label=$l$] {} (1);
\end{tikzpicture}
\caption{Graph $G_\mathrm{\ref{fig:graph}}$}\label{fig:graph}
\end{subfigure}
\begin{subfigure}[b]{0.45\textwidth}\centering
$\left(\begin{array}{c}
\begin{tikzpicture}[-latex,semithick]
\node[](c) {};  
\node[vertex,label=above:$0$](0) [left = 10pt of c] {};
\node[vertex,label=above:$1$](1) [right = 10pt of c] {};
\path (0) edge node[label=above:$l$] {} (1);
\end{tikzpicture}
\end{array},
\begin{array}{c}
\begin{tikzpicture}[-latex,semithick]
\node[](c) {};
\node[vertex,label=above:$0$](0) [left = 10pt of c] {};
\node[vertex,label=above:$1$](1) [right = 10pt of c] {};
\node[vertex,label=below:$2$](2) [below =of c] {};
\path (0) edge node[label=above:$l$] {} (1);
\path (0) edge node[label=below:$l$] {} (2);
\path (2) edge node[label=below:$l$] {} (1);
\end{tikzpicture}\end{array}
\right)$
\caption{A \gr{} $(L,R)$}\label{fig:rulpair}
\end{subfigure}
\begin{subfigure}[b]{0.3\textwidth}\centering
\begin{tikzpicture}[-latex,thin]
\node[](c) {};
\node[vertex,label=above:$0$,inner sep=2pt](0) [left = 10pt of c] {};
\node[vertex,label=above:$1$,inner sep=2pt](1) [right = 10pt of c] {};
\node[vertex,label=below:$2$](2) [below =of c] {};
\path (0) edge[thick] node[label=above:$l$] {} (1);
\path (0) edge[dotted] node[label=below:$l$] {} (2);
\path (2) edge[dotted] node[label=below:$l$] {} (1);
\end{tikzpicture}
\caption{$(L,R)$ compactly}\label{fig:rulcompact}
\end{subfigure}
\\
\begin{subfigure}[b]{0.2\textwidth}\centering
\begin{tikzpicture}[-latex,semithick]
\node[vertex,label=above:$0$](0) {};
\node[vertex,label=above:$1$](1) [right =of 0] {};
\path (0) edge node[label=above:$l$] {} (1);
\path (0) edge[loop below,distance=1cm,in=-135,out=-45] node[label=above left:$l$\ \ ] {} (0);
\end{tikzpicture}
\caption{Graph $G_\mathrm{\ref{fig:consequence1}}$}\label{fig:consequence1}
\end{subfigure}
\begin{subfigure}[b]{0.45\textwidth}\centering
\begin{tikzpicture}[-latex,semithick]
\node[vertex,label=above:$0$](0) {};
\node[vertex,label=above:$1$](1) [right =of 0] {};
\path (0) edge node[label=above:$l$] {} (1);
\path (1) edge[loop below,distance=1cm,out=-135,in=-45] node[label=above right:\ \ $l$] {} (1);
\end{tikzpicture}
\caption{Consequence graph $G_\mathrm{\ref{fig:consequence2}}$}\label{fig:consequence2}
\end{subfigure}
\begin{subfigure}[b]{0.3\textwidth}\centering
\begin{tikzpicture}[-latex,semithick]
\node[vertex,label=above:$0$](0) {};
\node[vertex,label=above:$1$](1) [right =of 0] {};
\path (0) edge node[label=above:$l$] {} (1);
\path (0) edge[loop below,distance=1cm,in=-135,out=-45] node[label=above left:$l$\ \ ] {} (0);
\node[vertex,label=above:$2$](0) [below =of 0]{}; 
\node[vertex,label=above:$3$](1) [right =of 0] {};
\path (0) edge node[label=above:$l$] {} (1);
\path (0) edge[loop below,distance=1cm,in=-135,out=-45] node[label=above left:$l$\ \ ] {} (0);
\end{tikzpicture}
\caption{Unconnected graph $G_\mathrm{\ref{fig:consequence3}}$}\label{fig:consequence3}
\end{subfigure}
\caption{Graphs and \gr{}s}
\end{figure}

We present a saturation procedure, so we need to capture when a graph is `saturated'.
For this purpose, we define `maintained', which indicates that a rule is applied sufficiently in a graph.
For defining `maintained', we first define graph embeddings:
\begin{definition}[Embedding]
Let $G_1,G_2\in\mathbb{G}_\mathcal{L}$.
If $\hat{f}(G_1) \xrightarrow{\subseteq} G_2$, then $(f,G_1,G_2)$ is an \emph{embedding} of $G_1$ in $G_2$.
In such case, we write $G_1 \xrightarrow{f} G_2$.
We say that $G_1$ is \emph{embedded} in $G_2$ if such an $f$ exists, written $G_1 \rightarrow G_2$.
It follows immediately that $G \xrightarrow{f} \hat{f}(G)$.
\end{definition}

We briefly explain our notations with the observation that embeddings form a category: its objects are graphs with labels $\mathcal{L}$, and its arrows are embeddings.
Although $\_ \xrightarrow{\subseteq} \_ = \_ \xrightarrow{\lambda x . x} \_ $, note that $G_1 \xrightarrow{\subseteq} G_2$ is only the identity arrow if $G_1 = G_2$,
which is why we avoid writing $\_ \xrightarrow{id}\_ $.

\begin{definition}[Maintained, (Least) Consequence Graph]\label{def:lcg}
A \gr{} $(L,R)$ with $L = (\mathcal{L},V_L,E_L)$ is \emph{maintained} in $G$ if for every embedding $L\xrightarrow{f}G$, there is an embedding $R\xrightarrow{g}G$ such that $f(v) = g(v)$ for all $v \in V_L$.
If for a set of \gr{}s $\mathcal{R}$, each \gr{} in $\mathcal{R}$ is maintained in $G$, we say that $G$ is a \emph{consequence graph} maintaining $\mathcal{R}$.
If furthermore $S$ is a subgraph of $G$, and $(S,G)$ is maintained in each consequence graph maintaining $\mathcal{R}$, then $G$ is a \emph{least consequence graph} of $S$ maintaining $\mathcal{R}$.
\end{definition}

We use chains to find least consequence graphs.
We look at two properties: `fairness' and `weak pushout', that help establish graphs to be a consequence graph and least, respectively.
To get some intuition, and hopefully help dispel some overly optimistic conjectures, we look at some examples before defining these two properties.

We begin with an example of an embedding.
Let $L = (\{k,l,m\},\{0,1\},\{(l,0,1)\})$ and $R = (\{k,l,m\},\{0,1,2\},\{(l,0,1),(l,0,2),(l,2,1)\})$ be graphs.
Note that $(L,R)$ is the \gr{} drawn in Figure~\ref{fig:rulpair}.
We can embed $L$ into the graph $G_\mathrm{\ref{fig:graph}}$ as shown in Figure~\ref{fig:graph}.
A corresponding embedding is $(f,L,G_\mathrm{\ref{fig:graph}})$ with $f(i) = a$ for $i\in\{0,1\}$.
There is also an embedding for $R$: $(g,R,G_\mathrm{\ref{fig:graph}})$ with $g(i) = a$ for $i\in\{0,1,2\}$, which satisfies $g(i)=f(i)$ for $i\in\{0,1\}$.
However, the \gr{} is not maintained, as for the embedding $(f',L,G_\mathrm{\ref{fig:graph}})$ with $f'(0) = b,\ f'(1) = c$, there is no such $g$.

As an example of a consequence graph, let $\mathcal{R} = \{(L,R)\}$ with $(L,R)$ as defined above, and let $G_\mathrm{\ref{fig:consequence1}} = (\{k,l,m\},\{0,1\},\{(l,0,1),(l,0,0)\})$, as drawn in Figure~\ref{fig:consequence1}.
Then $G_\mathrm{\ref{fig:consequence1}}$ is a consequence graph maintaining $\mathcal{R}$.
It is, however, not a least consequence graph of $L$ maintaining $\mathcal{R}$, since Figure~\ref{fig:consequence2} gives a consequence graph maintaining $\mathcal{R}$ in which $(L,G)$ is not maintained.
We believe every least consequence graph of $L$ maintaining $\mathcal{R}$ is infinite and even infinitely branching: loops in such consequence graphs would make that they are no longer `least', and to every edge with label $l$ there need to be two more of such edges in order to maintain $\mathcal{R}$.

The graph $G_\mathrm{\ref{fig:consequence1}}$ defined above is an example of a least consequence graph of $G_\mathrm{\ref{fig:consequence1}}$ maintaining $\mathcal{R}$.
Graph $G_\mathrm{\ref{fig:consequence3}}=(\{k,l,m\},\{0,1,2,3\},\{(l,0,1),(l,0,0),(l,2,3),(l,2,2)\})$, consisting of two disjunctive copies of $G_\mathrm{\ref{fig:consequence1}}$, is a least consequence graph too, see Figure~\ref{fig:consequence3}.
If a least consequence graph is unique, it must be the empty graph.

From the definition of maintained it follows that if $L \xrightarrow{\subseteq} M \xrightarrow{\subseteq} R$ and $(L,R)$ is maintained in $G$, then $(L,M)$ is maintained in $G$ too.
Consequently, if $(L,R)$ is maintained in a least consequence graph of $L$ maintaining $\mathcal{R}$, then $(L,R)$ is maintained in every consequence graph maintaining $\mathcal{R}$.

The following definition gives a sufficient condition to reach a consequence graph:
\begin{definition}[Fair Chain]
Given a set of \gr{}s $\mathcal{R}$ and a chain $S$.
We say that $S$ is a \emph{fair chain} for $\mathcal{R}$
if for each \gr{} $(L,R)\in \mathcal{R}$ and for each embedding $L\xrightarrow{f}S(i)$ there exists a $j\in\mathbb{N}$ and an embedding $R\xrightarrow{g}S(j)$ with $f(v) = g(v)$ for all $v$ in the set of vertices of $L$.
\end{definition}
\begin{lemma}\label{lem:fairchain}
If $S$ is a fair chain for $\mathcal{R}$, $S(\infty)$ is a consequence graph maintaining $\mathcal{R}$.
\end{lemma}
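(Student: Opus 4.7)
The plan is to unfold the definitions and exploit the finiteness of $L$. Fix any \gr{} $(L,R) \in \mathcal{R}$ and any embedding $f : L \to S(\infty)$. I must produce an embedding $g : R \to S(\infty)$ that agrees with $f$ on the vertices of $L$. Fairness hands me exactly such a $g$, but only once I have repositioned $f$ as an embedding into some finite stage $S(i)$ of the chain, so the first task is to perform that reposition.

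First I would observe that $L$ is finite. This is immediate from Definition~\ref{def:gr}: $R$ is required to be finite and $L$ is a subgraph of $R$, hence both the vertex set and edge set of $L$ are finite. Consequently the image $\hat{f}(L)$ is a finite subgraph of $S(\infty) = (\mathcal{L},\bigcup_i V_i,\bigcup_i E_i)$. Each of the finitely many vertices $f(v)$ lies in some $V_{i_v}$, and each of the finitely many edges $(l,f(x),f(y))$ lies in some $E_{i_e}$. Taking $i$ to be the maximum of these finitely many indices and using the chain property (every $S(k)$ is a subgraph of $S(k+1)$, hence of every later $S(i)$), all these vertices and edges lie in $S(i)$. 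Therefore $\hat{f}(L) \xrightarrow{\subseteq} S(i)$, which by definition means $L \xrightarrow{f} S(i)$ is an embedding.

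Now fairness applies: there exists $j \in \mathbb{N}$ and an embedding $R \xrightarrow{g} S(j)$ with $g(v) = f(v)$ for every $v$ in the vertex set of $L$. Finally, since $S(j)$ is a subgraph of $S(\infty)$, the relation $\hat{g}(R) \xrightarrow{\subseteq} S(j) \xrightarrow{\subseteq} S(\infty)$ holds by transitivity of the subgraph order, so $g$ is also an embedding $R \to S(\infty)$ agreeing with $f$ on $V_L$. Since $(L,R)$ and $f$ were arbitrary, every \gr{} in $\mathcal{R}$ is maintained in $S(\infty)$, which is exactly what it means for $S(\infty)$ to be a consequence graph maintaining $\mathcal{R}$.

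There is no real obstacle; the only point that deserves a moment of care is the finiteness argument used to pull $f$ back to a finite stage $S(i)$. That step relies on the definition of supremum being a genuine union (so that each edge and vertex of $\hat{f}(L)$ enters the chain at some finite index) together with the monotonicity of the chain (so that the finitely many indices can be absorbed into a single $i$). Everything else is a direct substitution into the definitions of \emph{fair chain}, \emph{embedding}, and \emph{maintained}.
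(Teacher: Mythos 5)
Your proof is correct and follows essentially the same route as the paper: use finiteness of $L$ (inherited from $R$) and the chain property to pull the embedding $f$ back into some finite stage $S(i)$, invoke fairness to obtain $g$ into some $S(j)$, and push $g$ forward into $S(\infty)$. If anything, you are slightly more careful than the paper's proof by accounting for isolated vertices of $L$ as well as its edges when choosing the stage $i$.
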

\begin{proof}
By definition, $S(\infty)$ is a consequence graph if we can show that $R$ is embedded in $S(\infty)$ for every $L$ that is embedded in it.
Take such an embedding $L\xrightarrow{f}S(\infty)$.
Then for each edge $(l,u,v)$ of $L$ there is an $i$ such that $(l,f(u),f(v))$ is an edge in $S(i)$.
Take the largest such $i$, then $f$ embeds $L$ in $S(i)$, and therefore $g$ embeds $R$ in $S(j)$ for some $j$ with $f(v) = g(v)$, so $g$ also embeds $R$ in $S(\infty)$.\qed
\end{proof}

We define \wps{} as an upper limit to each step, to ensure that a consequence graph found as a supremum of a chain built out of these steps is also a least consequence graph.
\begin{definition}[\WPS{}]
Let $G_1$ and $G_2$ be graphs in $\mathbb{G}_\mathcal{L}$, and let $(L,R)$ be a \gr{}.
We say that $(G_1,G_2)$ is a \emph{\wps{}} for $(L,R)$ if the following hold:\begin{itemize}
\item $G_1$ is a subgraph of $G_2$.
\item There are embeddings $L\xrightarrow{f}G_1$ and $R\xrightarrow{g}G_2$ such that $f(v)=g(v)$ for all vertices in $L$.
\item If there are embeddings $G_1\xrightarrow{f'}G$ and $R\xrightarrow{g'}G$ such that $f'(f(v)) = g'(v)$ for all vertices in $L$, then there is an embedding $G_2\xrightarrow{h}G$ such that $f'(v) = h(v)$ for all vertices in $G_1$.
\end{itemize}
\end{definition}
Like in our variation of \gr{}s, we use a \wps{} as a variation of the categorical pushout\footnote{Pushouts in a category with embeddings as arrows} that is typically used in graph rewriting, to ensure that chains are formed.
In such a (weak) pushout, the requirement of subgraphs is missing, making the entire definition symmetrical ($G_1$ and $R$ can be switched).
A pushout, as compared to a weak pushout, additionally requires that the embeddings $f'$ and $g'$ at the end of our definition, is unique.
These subtle differences arise out of our need to form chains, which aren't typical structures in graph rewriting.

\begin{definition}[(Simple) \WPC{}]\label{def:pushoutchain}
Let $S$ be a chain with $S(i) = (\mathcal{L},E_i,V_i)$, and let $\mathcal{R}$ be a set of \gr{}s.
If for each $i$, either $S(i) = S(i+1)$ or there exists an $r\in\mathcal{R}$ such that $(S(i),S(i+1))$ is a \wps{} for $r$, then $S$ is a \emph{simple \wpc{}} under $\mathcal{R}$.
\emph{\Wpc{}s} are inductively defined:
\begin{enumerate}
\item every simple \wpc{} under $\mathcal{R}$ is a \wpc{} under $\mathcal{R}$,
\item if for each $i$, there exists an $s$, which is a \wpc{} under $\mathcal{R}$ with $s(0) = S(i)$ and $s(\infty) = S(i+1)$, then $S$ is a \wpc{} under $\mathcal{R}$, \label{enum:inductive}
\item nothing else is a \wpc{}.
\end{enumerate}
\end{definition}
For most of this paper, it suffices to consider simple \wpc{}s.

There is a way to draw \wps{}s that is convenient in practice, although it can leave parts implicit.
On a \wps{} $(G_1,G_2)$ for $(L,R)$, as drawn in Figure~\ref{fig:wps}, large vertices indicate vertices in the image of $f$ for $L\xrightarrow{f}G_1$.
The applied \gr{} is that of Figure~\ref{fig:rulcompact}.
Edges in $\hat{f}(L)$ are drawn slightly thicker.
The corresponding $\hat{g}$ for $R\xrightarrow{g}G_2$ is drawn with dotted lines.
Since the drawing is of a \wps{}, small vertices connected to dotted lines are in $G_2$ but not in $G_1$.
The graph $G_1$ is the graph of Figure~\ref{fig:graph}, and $G_2$ is the graph in Figure~\ref{fig:graph2}.

\begin{figure}[tbp]
\centering
\begin{subfigure}[b]{0.4\textwidth}\centering
\begin{tikzpicture}[-latex,thin]
\node[vertex,label=above:$a$](1) {}; 
\node[vertex,label=above:$b$,inner sep=2pt](2) [right =of 1] {};
\node[vertex,label=below:$c$,inner sep=2pt](3) [below =of 2] {};
\path (1) edge[bend left] node[label=above:$k$] {} (2);
\path (3) edge[bend left] node[label=below:$m$] {} (1);
\path (2) edge[thick] node[label=right:$l$] {} (3);
\path (1) edge [loop left,in=135,out=-135,distance=1cm] node[label=$l$] {} (1);
\node[vertex,label=below:$d$](x) [right=of 2] {};
\path (2) edge[dotted] node[label=below:$l$] {} (x);
\path (x) edge[dotted] node[label=below:$l$] {} (3);
\end{tikzpicture}
\caption{A \wps{}, compactly}\label{fig:wps}
\end{subfigure}
\begin{subfigure}[b]{0.4\textwidth}\centering
\begin{tikzpicture}[-latex,semithick]
\node[vertex,label=above:$a$](1) {}; 
\node[vertex,label=above:$b$](2) [right =of 1] {};
\node[vertex,label=below:$c$](3) [below =of 2] {};
\path (1) edge[bend left] node[label=above:$k$] {} (2);
\path (3) edge[bend left] node[label=below:$m$] {} (1);
\path (2) edge node[label=right:$l$] {} (3);
\path (1) edge [loop left,in=135,out=-135,distance=1cm] node[label=$l$] {} (1);
\node[vertex,label=below:$d$](x) [right=of 2] {};
\path (2) edge node[label=below:$l$] {} (x);
\path (x) edge node[label=below:$l$] {} (3);
\end{tikzpicture}
\caption{Result of the step}\label{fig:graph2}
\end{subfigure}
\caption{On \wps{}s}
\end{figure}

A \wpc{} does not necessarily have a consequence graph as its supremum: we can construct a \wpc{} with $S(i) = G$ for any graph $G$.
However, the following holds:
\begin{lemma}\label{lem:pushoutchain}
If $S$ is a \wpc{} under $\mathcal{R}$ and $S(\infty)$ is a consequence graph maintaining $\mathcal{R}$, then $S(\infty)$ is a least consequence graph of $S(0)$ maintaining $\mathcal{R}$.
\end{lemma}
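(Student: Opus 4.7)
The plan is to unpack what remains for $S(\infty)$ to be a least consequence graph of $S(0)$ maintaining $\mathcal{R}$: by hypothesis it already maintains $\mathcal{R}$ and clearly contains $S(0)$ as a subgraph, so I only need to show that the rule $(S(0),S(\infty))$ is maintained in every consequence graph $G$ maintaining $\mathcal{R}$. Unfolding the definition of \emph{maintained}, this amounts to: for every embedding $S(0) \xrightarrow{f_0} G$ there is an embedding $S(\infty) \xrightarrow{g} G$ with $g(v)=f_0(v)$ on the vertices of $S(0)$.

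I would prove the strengthened extension statement, ``for any \wpc{} $S$ under $\mathcal{R}$, any consequence graph $G$ maintaining $\mathcal{R}$, and any embedding $S(0)\xrightarrow{f_0} G$, there is an embedding $S(\infty)\xrightarrow{g}G$ agreeing with $f_0$ on the vertices of $S(0)$'', by induction on the inductive definition of \wpc{} (Definition~\ref{def:pushoutchain}). This strengthening is what enables the recursive clause, where intermediate suprema need not themselves be consequence graphs.

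For the base case, a simple \wpc{}, I would iteratively construct embeddings $f_i : S(i) \to G$ that agree with $f_{i-1}$ on the vertices of $S(i-1)$. If $S(i)=S(i+1)$, set $f_{i+1}=f_i$. Otherwise $(S(i),S(i+1))$ is a \wps{} for some $(L,R)\in\mathcal{R}$ with witnesses $L\xrightarrow{\alpha}S(i)$ and $R\xrightarrow{\beta}S(i+1)$ agreeing on $L$'s vertices. Composition gives an embedding $L \to G$ via $f_i\circ\alpha$; since $G$ maintains $(L,R)$ there is an embedding $R\xrightarrow{\gamma}G$ with $\gamma(v)=f_i(\alpha(v))$ on the vertices of $L$. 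Feeding $f_i$ and $\gamma$ into the third clause of the \wps{} definition supplies an embedding $S(i+1)\xrightarrow{f_{i+1}}G$ agreeing with $f_i$ on the vertices of $S(i)$. The pointwise union $g=\bigcup_i f_i$ is then a well-defined vertex map, and every edge of $S(\infty)$ lies in some $S(i)$, so $g$ maps it to an edge of $G$; hence $g$ is the desired embedding extending $f_0$.

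For the inductive clause of Definition~\ref{def:pushoutchain}, each pair $(S(i),S(i+1))$ is witnessed by a sub-\wpc{} $s_i$ with $s_i(0)=S(i)$ and $s_i(\infty)=S(i+1)$; applying the induction hypothesis to $s_i$, together with $G$ and the already built $f_i$, yields $f_{i+1}$, and the same union argument concludes. The main obstacle is recognising that the statement as given is not directly suited to induction — one does not know that $s_i(\infty)$ is a consequence graph — so the argument must be recast as an extension property, from which maintenance of $(S(0),S(\infty))$ then follows by specialisation.
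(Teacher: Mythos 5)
Your proof is correct and follows essentially the same route as the paper's: extend an embedding of $S(0)$ step by step along the chain, using maintenance of each $(L,R)$ in $G$ together with the universal (third) clause of the weak pushout step to obtain each $f_{i+1}$ agreeing with $f_i$, then pass to the limit of these compatible embeddings, with an inner induction for non-simple chains. Your explicit strengthening to an extension property (not assuming intermediate suprema are consequence graphs) is precisely what the paper's terser remarks --- taking the limit of nested embeddings and ``composing embeddings'' for the non-simple case --- leave implicit.
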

\begin{proof}
Let $G$ be a consequence graph.
We first consider the case in which $S$ is a simple \wpc{}.
By induction on $i$, we prove that $(S(0),S(i))$ is maintained in $G$:
For $i=0$, $(S(0),S(0))$ is trivially maintained in any graph.
For $S(i+1)$, assume $(S(0),S(i))$ is maintained in $G$ by induction.
If $S(i+1) = S(i)$, then $S(i+1)$ is trivially embedded in $G$.
If $S(i)\neq S(i+1)$, then $(S(i),S(i+1))$ is a \wps{} for some $(L,R) \in \mathcal{R}$.
Given an embedding $S(0)\rightarrow G$, as $L$ is embedded in $S(i)$, transitively $L$ is embedded in $G$.
Since $G$ is a consequence graph, $R$ is embedded in $G$ such that, by definition, there exists an embedding of $S(i+1)$ into $G$.
We conclude that for all $i$, $S(i)$ is embedded in $G$.
To conclude that $S(\infty)$ is also embedded in $G$, note that the individual embeddings of $S(i)$ in $G$ have a limit (each embedding function is contained in the next by $f'(v) = h(v)$).
The case in which the \wpc{} $S$ is not simple follows inductively from composing embeddings.
Therefore $S(\infty)$ is a least consequence graph. \qed
\end{proof}

A chain that is both fair and a \wpc{} is called a fair \wpc{}.
A fair \wpc{} has a least consequence graph as its supremum.
This gives a way to create least consequence graphs, which we'll come back to in Section~\ref{sec:procedure}.

\section{Translation between \TR{}s and \GR{}s}\label{sec:Rules}
This section shows how to turn \tr{}s into \gr{}s.
For every \tr{}, there is a corresponding \gr{} that is maintained if and only if the \tr{} holds.
This allows us to use \gr{}s in order to reason about \tr{}s.
We introduce a translate function $\transL : \exprL \rightarrow \mathbb{G}_\mathcal{L}$ in Definition~\ref{def:translation} to make precise which graph belongs to a term.
Lemma~\ref{lem:translatesound} states how the two correspond in the case of \tr{}s of the shape $\_ \sqsubseteq \_$.
Using Lemma~\ref{lem:subseteq}, this means we can encode a set of \tr{}s as a set of \gr{}s.

\begin{definition}[Translation]\label{def:translation}
Given a term $\mathbb{e}$, we say that $\transL(\mathbb{e})$ is the \emph{translation} of $\mathbb{e}$.
We define $\transL : \exprL \rightarrow \mathbb{G}_\mathcal{L}$ as follows:
\begin{align*}
\transL(l)&=\left(\mathcal{L},\{0,1\},\{(l,0,1)\}\right)\\
\transL(\mathbb{e} \converse{})&=\hat{f}(\transL(\mathbb{e})) \text{\quad with $f(v) = 1 - v$ for $v<2$ and $f(v) = v$ for $v\geq 2$.}\\
\transL(\mathbb{e}_1 \fcmp \mathbb{e}_2)&=\hat{f_1}(\transL(\mathbb{e}_1)) \cup \hat{f_2}(\transL(\mathbb{e}_2))\\
&\qquad\text{with $f_1(0) = 0$ and $f_1(v) = v + \left|\transL(\mathbb{e}_2)\right| - 1$ for $v\neq 0$,}\\
&\qquad\text{and $f_2(0) = \left|\transL(\mathbb{e}_2)\right|$ and $f_2(v) = v$ for $v\neq 0$.}\\
\transL(\mathbb{e}_1 \sqcap \mathbb{e}_2)&=\transL(\mathbb{e}_1) \cup \hat{f}(\transL(\mathbb{e}_2))\\
&\qquad\text{with $f(v) = v$ for $v<2$ and $f(v) = v + \left|\transL(\mathbb{e}_1)\right| - 2$ for $v\geq 2$.}
\end{align*}
For notational convenience, $\transL\left(\mathbb{e}_L \sqsubseteq \mathbb{e}_R\right) = \left(\transL(\mathbb{e}_L),\transL(\mathbb{e}_L \sqcap \mathbb{e}_R)\right)$.
It follows that $\transL\left(\mathbb{e}_L \sqsubseteq \mathbb{e}_R\right)$ is a \gr{}.
\end{definition}

As an example of how the translation works, the graphs in Figure~\ref{fig:rulpair} are $\transL(l)$ and $\transL\left(l\sqcap l\fcmp l\right)$ respectively.
As a whole, the \gr{} in Figure~\ref{fig:rulpair} is $\transL\left(l\sqsubseteq l\fcmp l\right)$.

The vertices $0$ and $1$ of $\transL(\mathbb{e})$ can intuitively be understood as the variables $x$ and $y$ as in Definition~\ref{def:semantics}.
Lemma~\ref{lem:translate} makes this precise:
\begin{lemma}\label{lem:translate}
$(v_0,v_1) \in \sem{\mathbb{e}}G$ if and only if there is an $f$ such that $\transL(\mathbb{e})\xrightarrow{f}G$ with $f(i) = v_i$ for $i<2$.
\end{lemma}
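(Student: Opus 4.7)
My plan is to prove Lemma~\ref{lem:translate} by structural induction on the term $\mathbb{e}$, mirroring the four clauses in Definition~\ref{def:translation}. In each case, an embedding $\transL(\mathbb{e}) \xrightarrow{f} G$ with $f(0)=v_0$ and $f(1)=v_1$ should correspond exactly to a witness tree for $(v_0,v_1) \in \sem{\mathbb{e}}G$ under Definition~\ref{def:semantics}.

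The base case $\mathbb{e}=l$ is immediate: $\transL(l) = (\mathcal{L},\{0,1\},\{(l,0,1)\})$, so an embedding fixing $0\mapsto v_0$ and $1\mapsto v_1$ exists iff $(l,v_0,v_1)\in E$, which is precisely $(v_0,v_1)\in\sem{l}G$. For the converse case, the relabeling $f(0)=1$, $f(1)=0$ in Definition~\ref{def:translation} swaps the roles of the two distinguished vertices, so embeddings of $\transL(\mathbb{e}\converse)$ that fix $0\mapsto v_0$ and $1\mapsto v_1$ are in bijection with embeddings of $\transL(\mathbb{e})$ that fix $0\mapsto v_1$ and $1\mapsto v_0$; the induction hypothesis then yields the claim from the definitional equality $\sem{\mathbb{e}\converse}G = \{(y,x)\mid (x,y)\in\sem{\mathbb{e}}G\}$.

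For the composition case $\mathbb{e}_1 \fcmp \mathbb{e}_2$, I unfold the translation: vertex $0$ of $\transL(\mathbb{e}_1)$ is identified with $0$ (the source), vertex $1$ of $\transL(\mathbb{e}_1)$ is pushed to position $|\transL(\mathbb{e}_2)|$, vertex $0$ of $\transL(\mathbb{e}_2)$ is also identified with $|\transL(\mathbb{e}_2)|$ (the middle vertex), and vertex $1$ of $\transL(\mathbb{e}_2)$ stays at $1$ (the target). An embedding $f$ of $\transL(\mathbb{e}_1\fcmp\mathbb{e}_2)$ with $f(0)=v_0$ and $f(1)=v_1$ splits into two embeddings $f_1$ of $\transL(\mathbb{e}_1)$ and $f_2$ of $\transL(\mathbb{e}_2)$ that agree on the middle vertex and map it to some $z = f(|\transL(\mathbb{e}_2)|)$; applying the induction hypothesis to each gives the existential witness $(v_0,z)\in\sem{\mathbb{e}_1}G$ and $(z,v_1)\in\sem{\mathbb{e}_2}G$. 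Conversely, given such a $z$, two embeddings produced by the IH can be glued at the middle vertex (since the fresh vertex labels in $\transL(\mathbb{e}_1)$ and $\transL(\mathbb{e}_2)$ are shifted to be disjoint) to yield the required $f$. The intersection case is analogous and simpler: $\transL(\mathbb{e}_1\sqcap\mathbb{e}_2)$ is the union of $\transL(\mathbb{e}_1)$ and a copy of $\transL(\mathbb{e}_2)$ sharing vertices $0$ and $1$, so an embedding fixing $0\mapsto v_0$ and $1\mapsto v_1$ corresponds exactly to a pair of embeddings, one from each conjunct, with the same endpoint assignments, matching the definition $\sem{\mathbb{e}_1 \sqcap \mathbb{e}_2}G = \sem{\mathbb{e}_1}G \cap \sem{\mathbb{e}_2}G$.

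The main obstacle is purely bookkeeping around the renaming functions in Definition~\ref{def:translation}: I have to check carefully that the shifts $v \mapsto v + |\transL(\mathbb{e}_2)| - 1$ and $v \mapsto v + |\transL(\mathbb{e}_1)| - 2$ really make the internal vertices of the translated subgraphs disjoint, so that the two partial embeddings obtained from the IH can be combined without clashing outside the shared endpoints. Once this disjointness is verified, the two directions in each case are mechanical: the forward direction extracts the intermediate image of a distinguished vertex, and the backward direction unions the embeddings pointwise, appealing to the identity $\widehat{f_1}(G_1) \cup \widehat{f_2}(G_2) \xrightarrow{\subseteq} G$ whenever $\widehat{f_i}(G_i) \xrightarrow{\subseteq} G$ for $i \in \{1,2\}$.
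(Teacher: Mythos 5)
Your proposal is correct and is essentially the paper's own proof: the paper proves Lemma~\ref{lem:translate} by the same structural induction on $\mathbb{e}$, relying (as you do in your disjointness bookkeeping) on the auxiliary fact that the vertex set of $\transL(\mathbb{e})$ is the initial segment $\{i \mid i < |\transL(\mathbb{e})|\}$, which makes the shifted copies overlap only in the intended shared vertices. Your write-up simply spells out the case analysis that the paper leaves implicit.
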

\begin{proof}
The statement follows by induction on $\mathbb{e}$, using that the vertices in $\transL(\mathbb{e})$ are $\left\{i \mid i \in \mathbb{N} \wedge i<\left|\transL(\mathbb{e})\right|\right\}$.\qed
\end{proof}

We can use Lemma~\ref{lem:translate} to show a connection between \gr{}s and \tr{}s:
\begin{lemma}\label{lem:translatesound}
A \tr{} $\mathbb{e}_L \sqsubseteq \mathbb{e}_R$ holds in $G$ if and only if $\left(\transL(\mathbb{e}_L), \transL(\mathbb{e}_L \sqcap \mathbb{e}_R)\right)$ is maintained in $G$.
\end{lemma}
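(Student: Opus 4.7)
The plan is to reduce both sides to Lemma~\ref{lem:translate} by unpacking what $\_ \sqsubseteq \_$ means through Lemma~\ref{lem:subseteq}, and what it means for $(\transL(\mathbb{e}_L),\transL(\mathbb{e}_L \sqcap \mathbb{e}_R))$ to be maintained. Throughout, I will write $L=\transL(\mathbb{e}_L)$ and $R=\transL(\mathbb{e}_R)$; by Definition~\ref{def:translation}, $\transL(\mathbb{e}_L \sqcap \mathbb{e}_R) = L \cup \hat{f}(R)$, where the re-indexing $f$ fixes $0$ and $1$ and sends every $v \geq 2$ to a vertex not in $L$. This is the crucial structural fact I will use: an embedding of $\transL(\mathbb{e}_L \sqcap \mathbb{e}_R)$ into $G$ is essentially a pair of embeddings of $L$ and $R$ into $G$ that agree on vertices $0$ and $1$.

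For the ($\Rightarrow$) direction, I assume $\sem{\mathbb{e}_L}G \subseteq \sem{\mathbb{e}_R}G$ and take an arbitrary embedding $L \xrightarrow{h_L} G$. Setting $v_i = h_L(i)$ for $i<2$, Lemma~\ref{lem:translate} gives $(v_0,v_1) \in \sem{\mathbb{e}_L}G$, so by assumption $(v_0,v_1) \in \sem{\mathbb{e}_R}G$, and Lemma~\ref{lem:translate} again provides an embedding $R \xrightarrow{h_R} G$ with $h_R(i) = v_i$ for $i<2$. I then glue $h_L$ and $h_R \circ f^{-1}$ into a single function $g$ on the vertex set of $\transL(\mathbb{e}_L \sqcap \mathbb{e}_R)$; this is well-defined because the two pieces agree on $\{0,1\}$, the only vertices they share. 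Since $\hat{g}(L) = \hat{h_L}(L) \subseteq G$ and $\hat{g}(\hat{f}(R)) = \hat{h_R}(R) \subseteq G$, the union $\hat{g}(\transL(\mathbb{e}_L \sqcap \mathbb{e}_R))$ is a subgraph of $G$. Hence $g$ is the embedding required to witness that the rule is maintained, and $g(v)=h_L(v)$ on $V_L$ by construction.

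For the ($\Leftarrow$) direction, I assume the \gr{} is maintained and take $(v_0,v_1) \in \sem{\mathbb{e}_L}G$. Lemma~\ref{lem:translate} gives an embedding $L \xrightarrow{h_L} G$ with $h_L(i) = v_i$. By maintenance, there exists $g : \transL(\mathbb{e}_L \sqcap \mathbb{e}_R) \to G$ extending $h_L$ on $V_L$, hence in particular $g(0)=v_0$, $g(1)=v_1$. Composing with the inclusion $f : R \to \transL(\mathbb{e}_L \sqcap \mathbb{e}_R)$ yields an embedding $R \xrightarrow{g \circ f} G$, and by Definition~\ref{def:translation} the map $g \circ f$ sends $0 \mapsto v_0$ and $1 \mapsto v_1$. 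Applying Lemma~\ref{lem:translate} once more gives $(v_0,v_1) \in \sem{\mathbb{e}_R}G$, so $\sem{\mathbb{e}_L}G \subseteq \sem{\mathbb{e}_R}G$, which by Lemma~\ref{lem:subseteq} means $G \vDash \mathbb{e}_L \sqsubseteq \mathbb{e}_R$.

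The main bookkeeping burden is checking that the vertex relabeling built into $\transL(\_ \sqcap \_)$ makes the glued function $g$ actually an embedding, i.e.\ that the two component embeddings $h_L$ and $h_R$ live on disjoint copies of the auxiliary vertices and only overlap on $\{0,1\}$. That is a direct consequence of the definition of $\transL(\mathbb{e}_1 \sqcap \mathbb{e}_2)$, so once that is stated cleanly the rest is a mechanical use of Lemma~\ref{lem:translate} in each direction.
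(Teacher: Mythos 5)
Your proof is correct and follows essentially the same route as the paper's: both directions reduce to Lemma~\ref{lem:translate}, with your gluing of $h_L$ and $h_R$ into a single embedding of $\transL(\mathbb{e}_L \sqcap \mathbb{e}_R)$ being exactly the paper's ``construct $g'$ following Definition~\ref{def:translation}'' step. The only cosmetic difference is in the converse direction, where the paper applies Lemma~\ref{lem:translate} to the whole term $\mathbb{e}_L \sqcap \mathbb{e}_R$ and then uses $\sem{\mathbb{e}_L \sqcap \mathbb{e}_R}G \subseteq \sem{\mathbb{e}_R}G$, whereas you first restrict the obtained embedding to the relabeled copy of $\transL(\mathbb{e}_R)$; both are equally valid.
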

\begin{proof}
Suppose the \tr{} holds in $G$, and $\transL(\mathbb{e}_L)\xrightarrow{f}G$.
It follows from Lemma~\ref{lem:translate} that $(f(0),f(1)) \in \sem{\mathbb{e}_L}G$.
As the \tr{} holds, $(f(0),f(1)) \in \sem{\mathbb{e}_R}G$.
Using Lemma~\ref{lem:translate}, take $g$ with $\transL(\mathbb{e}_R)\xrightarrow{g}G$ and $f(v) = g(v)$ for $v < 2$.
Following Definition~\ref{def:translation}, construct $g'$ such that $\transL(\mathbb{e}_L \sqcap \mathbb{e}_R)\xrightarrow{g'}G$ and $f(v) = g'(v)$ for $v$ in the vertices of $\transL(\mathbb{e}_L)$.

For the other direction, suppose $\left(\transL(\mathbb{e}_L),\transL(\mathbb{e}_L \sqcap \mathbb{e}_R)\right)$ is maintained in $G$, and let $(x,y) \in \sem{\mathbb{e}_L}G$.
By Lemma~\ref{lem:translate}, there is an $f$ such that $\transL(\mathbb{e}_L)\xrightarrow{f}G$ with $f(0) = x$ and $f(1) = y$.
Since the \gr{} is maintained, there is a $g$ such that $\transL(\mathbb{e}_L \sqcap \mathbb{e}_R)\xrightarrow{g}G$ with $g(0)=f(0)=x$ and $g(1)=f(1)=y$.
Again using Lemma~\ref{lem:translate}, $(x,y)\in\sem{\mathbb{e}_L \sqcap \mathbb{e}_R}G = \sem{\mathbb{e}_L}G \cap \sem{\mathbb{e}_R}G \subseteq \sem{\mathbb{e}_R}G$, so the \tr{} holds in $G$. \qed
\end{proof}

We use \gr{}s to deal with the requirements in Definition~\ref{def:standard} to make a graph standard, in a way similar to Lemma~\ref{lem:translatesound}.
We give a set of \gr{}s that make checking if a standard graph exists easy:
A standard graph exists provided that $\sem{\bot}G = \{\}$, and that a set of additional \gr{}s, which we will call the standard-rules, is maintained.
This motivates the following definitions:

\begin{definition}[Conflict (Free)]\label{def:conflict}
Let $\bot \in \mathcal{L}$.
The relation symbol $\bot$ stands for an empty relation.
A graph for which $\sem{\bot}G = \{\}$ is \emph{conflict free}.
If $G = (\mathcal{L},V,E)$ is conflict free, we have $\forall x y. (\bot,x,y) \not\in E$,
so we call any edge $(\bot,x,y)$ a \emph{conflict}.
\end{definition}
\begin{definition}[Top-rule]
Let $\top \in \mathcal{L}$.
The relation symbol $\top$ stands for the full relation.
We refer to the \gr{} $((\mathcal{L},\{0,1\},\{\}),(\mathcal{L},\{0,1\},\{(\top,0,1)\}))$ as the \emph{top-rule}, since any graph $G = (\mathcal{L},V,E)$ satisfies $\sem{\top}G = \{(x,y) \mid x,y \in V\}$ if and only if $G$ maintains the top-rule.
\end{definition}
\begin{definition}[Nonempty-rule]
Let $\bot,\top \in \mathcal{L}$.
The \gr{} $((\mathcal{L},\{\},\{\}),(\mathcal{L},\{0\},\{\}))$ is called the \emph{nonempty-rule}.
A graph $G = (\mathcal{L},V,E)$ maintains the nonempty-rule if and only if $V \neq \{\}$.
\end{definition}

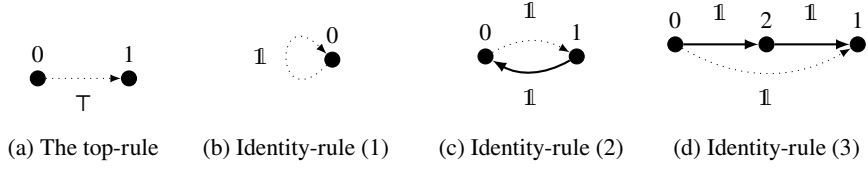
\begin{figure}[tbp]
\centering
\begin{subfigure}[b]{0.2\textwidth}\centering
\begin{tikzpicture}[-latex,thin]
\node[vertex,label=above:$0$,inner sep=2pt](0) {}; 
\node[vertex,label=above:$1$,inner sep=2pt](1) [right =of 0]{}; 
\path (0) edge[dotted] node[label=below:$\top$] {} (1);
\end{tikzpicture}
\caption{The top-rule}\label{fig:toprule}
\end{subfigure}
\begin{subfigure}[b]{0.25\textwidth}\centering
\begin{tikzpicture}[-latex,thin]
\node[vertex,label=above:$0$,inner sep=2pt](0) {};
\path (0) edge[dotted,in=135,out=-135,loop,distance=1cm] node[label=left :$\ident{}$] {} (0);
\end{tikzpicture}
\caption{Identity-rule~(\ref{idt:reflexive})}\label{fig:reflrule}
\end{subfigure}
\begin{subfigure}[b]{0.25\textwidth}\centering
\begin{tikzpicture}[-latex,thin]
\node[vertex,label=above:$0$,inner sep=2pt](0) {}; 
\node[vertex,label=above:$1$,inner sep=2pt](1) [right =of 0]{}; 
\path (1) edge[thick,bend left] node[label=below :$\ident$] {} (0);
\path (0) edge[dotted,bend left] node[label=above :$\ident$] {} (1);
\end{tikzpicture}
\caption{Identity-rule~(\ref{idt:symmetric})}\label{fig:symrule}
\end{subfigure}
\begin{subfigure}[b]{0.25\textwidth}\centering
\begin{tikzpicture}[-latex,thin]
\node[vertex,label=above:$0$,inner sep=2pt](0) {}; 
\node[vertex,label=above:$2$,inner sep=2pt](2) [right =of 0]{}; 
\node[vertex,label=above:$1$,inner sep=2pt](1) [right =of 2]{}; 
\path (0) edge[thick] node[label=above :$\ident$] {} (2);
\path (2) edge[thick] node[label=above :$\ident$] {} (1);
\path (0) edge[dotted,bend right] node[label=below :$\ident$] {} (1);
\end{tikzpicture}
\caption{Identity-rule~(\ref{idt:transitive})}\label{fig:nonemptyrule}
\end{subfigure}
\caption{Several standard-rules}
\end{figure}

A conflict-free graph $G$ that maintains the top-rule, satisfies $\sem{\top}G \neq \sem{\bot}G$ if and only if it maintains the nonempty-rule.

The relation symbol $\ident$ models the identity relation $\{(x,x) \mid x \in V\}$.
However, we do not let $\sem{\ident}G$ represent this relation directly.
Instead, we let $\ident$ stand for an equivalence relation and ensure that we can make a graph based on equivalence classes, in which $\sem{\ident}G = \{(x,x) \mid x \in V\}$ holds.
\begin{definition}[Identity-rules]
Given a set of relation symbols $\mathcal{L}$, we say that the following set of \gr{}s are the \emph{identity-rules} for $\mathcal{L}$:
\begin{align}
\big((\mathcal{L},\{0\},\{\})&\ ,\ \ (\mathcal{L},\{0\},\{(\ident,0,0)\})\big)\label{idt:reflexive} \\
\transL\big(\ident \converse{} &\sqsubseteq \ident \big)\label{idt:symmetric} \\
\transL\big(\ident \fcmp \ident &\sqsubseteq \ident \big)\label{idt:transitive} \\
\forall l\in\mathcal{L}.\ \transL\big(\ident \fcmp l \fcmp \ident &\sqsubseteq l \big)\label{idt:eachl}
\end{align}
\end{definition}

Identity-rules~(\ref{idt:reflexive}) to~(\ref{idt:eachl}) can be understood as ensuring $\ident$ is reflexive, symmetric, transitive, and a congruence respectively.
The identity-rules hold under the standard semantics of $\ident$, that is: if for some graph $G = (\mathcal{L},E,V)$, we have $\sem{\ident}G = \{(x,x) \mid x \in V\}$ then the identity-rules are maintained in $G$.
The following lemma speaks about the other direction:
\begin{lemma}\label{lem:identrules}
Let $G = (\mathcal{L},V,E)$ be a graph in which the identity-rules for $\mathcal{L}$ are maintained.
There is an idempotent $f$ such that $\hat{f}(G) \xrightarrow{\subseteq} G$, and $\sem{\ident}{\hat{f}(G)} = \{(f(x),f(x)) \mid x \in V\}$.
\end{lemma}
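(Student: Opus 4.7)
The plan is to use the identity-rules to show that $\sem{\ident}G$ is a congruence on $V$, choose a representative from each equivalence class, and verify the required properties.

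First, I observe that identity-rule~(\ref{idt:reflexive}) being maintained gives $(x,x) \in \sem{\ident}G$ for every $x \in V$ (reflexivity), identity-rule~(\ref{idt:symmetric}) together with Lemma~\ref{lem:translatesound} gives $\sem{\ident \converse{}}G \subseteq \sem{\ident}G$ (symmetry), and identity-rule~(\ref{idt:transitive}) gives $\sem{\ident \fcmp \ident}G \subseteq \sem{\ident}G$ (transitivity). Thus $\sem{\ident}G$ is an equivalence relation on $V$. I then define $f: V \to V$ by picking, for each equivalence class $[x]$, a distinguished representative $r_{[x]} \in [x] \subseteq V$ and setting $f(x) = r_{[x]}$. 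Since $f(x)$ lies in the same class as $x$, we have $f(f(x)) = r_{[f(x)]} = r_{[x]} = f(x)$, so $f$ is idempotent.

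Next I show $\hat{f}(G) \xrightarrow{\subseteq} G$. The vertex set of $\hat{f}(G)$ is $\{f(x) \mid x \in V\} \subseteq V$, so I only need to check the edges. Take any edge $(l, f(x), f(y))$ of $\hat{f}(G)$, arising from $(l, x, y) \in E$. Since $(f(x), x) \in \sem{\ident}G$ (by symmetry, as $f(x)$ and $x$ are in the same class) and $(y, f(y)) \in \sem{\ident}G$, composing these with $(x,y) \in \sem{l}G$ yields $(f(x), f(y)) \in \sem{\ident \fcmp l \fcmp \ident}G$. By identity-rule~(\ref{idt:eachl}) together with Lemma~\ref{lem:translatesound}, this is contained in $\sem{l}G$, so $(l, f(x), f(y)) \in E$.

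Finally I verify $\sem{\ident}{\hat{f}(G)} = \{(f(x), f(x)) \mid x \in V\}$. For the $\supseteq$ direction, identity-rule~(\ref{idt:reflexive}) gives $(\ident, f(x), f(x)) \in E$, which by the definition of $\hat{f}$ yields $(\ident, f(f(x)), f(f(x))) = (\ident, f(x), f(x))$ as an edge of $\hat{f}(G)$. For the $\subseteq$ direction, suppose $(\ident, u, v)$ is an edge of $\hat{f}(G)$; then it comes from some $(\ident, x, y) \in E$ with $u = f(x)$ and $v = f(y)$. But $(\ident, x, y) \in E$ means $x$ and $y$ belong to the same equivalence class of $\sem{\ident}G$, and so $f(x) = f(y)$, i.e.\ $u = v$.

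The only mildly subtle point is the congruence step in the second paragraph: one must apply identity-rule~(\ref{idt:eachl}) to convert the abstract fact that $\ident$-related vertices are ``interchangeable'' into the concrete fact that relabeling edge endpoints by $f$ still lands inside $E$. All other steps are immediate consequences of the equivalence-relation structure and of choosing $f$ as a class-representative map.
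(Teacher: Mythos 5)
Your proof is correct and follows essentially the same route as the paper's: the first three identity-rules make $\sem{\ident}G$ an equivalence relation, $f$ picks canonical representatives (hence is idempotent), and identity-rule~(\ref{idt:eachl}) gives the congruence step showing $\hat{f}(G) \xrightarrow{\subseteq} G$. You merely spell out in more detail the verification of $\sem{\ident}{\hat{f}(G)} = \{(f(x),f(x)) \mid x \in V\}$, which the paper leaves as an immediate consequence of the choice of $f$.
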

\begin{proof}
Since the first three identity-rules for $\mathcal{L}$ are maintained in $G$, $\sem{\ident}G$ is an equivalence relation on $V$.
Let $f$ be some function that takes a canonical element from the equivalence class.
It follows that $\sem{\ident}{\hat{f}(G)} = \{(f(x),f(x)) \mid x \in V\}$, and it remains to be shown that $\hat{f}(G) \xrightarrow{\subseteq} G$.
For the vertices, this is immediate.
For the edges: For all $(l,x,y) \in E$ we show that $(l,f(x),f(y))\in E$.
By our choice of $f$, $(\ident,f(x),x)\in E$ and $(\ident,y,f(y))\in E$.
Suppose $(l,x,y)\in E$.
Since Identity-rule~(\ref{idt:eachl}) for $l$ is maintained in $G$, we get $(l,f(x),f(y)) \in E$.
Therefore $\hat{f}(G) \xrightarrow{\subseteq} G$. \qed
\end{proof}

Lemma~\ref{lem:identrules} gives us exactly the desired semantics for $\ident$: for $(\mathcal{L},V',E') = \hat{f}(G)$, we have $\sem{\ident}{\hat{f}(G)} = \{(x,x) \mid x \in V'\}$.
Furthermore, it states that $\hat{f}(G)$ and $G$ are mutually embedded ($G \rightarrow \hat{f}(G)$ holds for all $f$).

We now proceed to introduce constants, through a set of \tr{}s.
This characterisation is similar to how points are characterized in relation algebra, see for instance work by Schmidt and Str\"ohlein~\cite{Schmidt:1985aa}.
If $c$ is a constant, then $p = c\fcmp \top$ is a point (sometimes called a right ideal).
The corresponding constant can be retrieved from a point: $c = p \fcmp p\converse{}$.
Our presentation here in terms of constants rather than points is a matter of personal preference.
These rules state that the relation $\sem{c}G$ should be nonempty, the cross-product of two sets, and a subset of the identity relation.
Finally, we state that for two different constants, $\sem{c_1}G$ and $\sem{c_2}G$ should be non-overlapping.
\begin{definition}[Constant-rules, Standard-rules]
Let $\mathcal{L}$ be a standard set of labels with constants $\mathcal{C}$, we say that the following set of \gr{}s are the \emph{constant-rules} for $\mathcal{C}$:
\begin{align}
\forall c\in\mathcal{C}.&& \transL\big(\top &\sqsubseteq \top \fcmp c \fcmp \top \big)\label{pr:exists} \\
\forall c\in\mathcal{C}.&& \transL\big(c \fcmp \top \fcmp c &\sqsubseteq c \big) \label{pr:crossprod} \\
\forall c\in\mathcal{C}.&& \transL\big(c &\sqsubseteq \ident \big)\label{pr:ident} \\
\forall c_1, c_2\in\mathcal{C}.\  c_1 \neq c_2 \Rightarrow && \transL\big(c_1 \fcmp c_2 &\sqsubseteq \bot \big)\label{pr:neq}
\end{align}
The top-, nonempty-, identity-, and constant-rules together are called the \emph{standard-rules} for $\mathcal{C}$ and $\mathcal{L}$, written $\mathcal{S}_{\mathcal{C},\mathcal{L}}$.
\end{definition}

Similar to our treatment of $\ident$, we would like to find an $f$ such that $\sem{c}{\hat{f}(G)} = \{(c,c)\}$.
The $f$ of Lemma~\ref{lem:identrules} gives us a graph that is isomorphic to one in which $\forall c\in\mathcal{C}.\ \sem{c}{\hat{f}(G)} = \{(c,c)\}$ holds, provided that $G$ is conflict free and maintains the standard-rules.
Lemma~\ref{lem:existsgraph} says that, for finding a model with `standard semantics', it suffices to find a conflict free graph that maintains the standard-rules.

\begin{lemma}\label{lem:existsgraph}
Let $\mathcal{L}$ be a standard set of labels with constants $\mathcal{C}$.
Let $\mathcal{T}$ be a set of \tr{}s over $\mathcal{L}$ of the shape $\_ \sqsubseteq \_$.
We define $\mathcal{R} = \mathcal{S}_{\mathcal{C},\mathcal{L}} \cup \{\transL(t)\mid t\in \mathcal{T}\}$.
Let $G'$ be a conflict free consequence graph maintaining $\mathcal{R}$, then there is a graph $G = (\mathcal{L},E,V)$, and functions $f$ and $g$ such that:
\begin{enumerate}
\item $G = \hat{f}(\hat{g}(G)) =\hat{f}(G')$, and $\hat{g}(G) \xrightarrow{\subseteq} G'$.
\item The graph $G$ is standard.
\item Every \tr{} in $\mathcal{T}$ holds in $G$.
\end{enumerate}
\end{lemma}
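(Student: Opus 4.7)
\emph{Proof plan.} Writing $G' = (\mathcal{L}, V', E')$, the plan is to build $G$ as a relabelled quotient of $G'$: first quotient by the $\ident$-equivalence classes using Lemma~\ref{lem:identrules}, and then rename the class belonging to each constant $c\in\mathcal{C}$ so that its canonical representative becomes the symbol $c$ itself.

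Concretely, I would first apply Lemma~\ref{lem:identrules} to $G'$ (the identity-rules lie in $\mathcal{S}_{\mathcal{C},\mathcal{L}}\subseteq\mathcal{R}$, so they are maintained in $G'$) to obtain an idempotent $f_0:V'\to V'$ with $\hat{f_0}(G') \xrightarrow{\subseteq} G'$ and $\sem{\ident}{\hat{f_0}(G')}$ the diagonal on $f_0(V')$. Second, I would analyse the constants: for each $c\in\mathcal{C}$, set $X_c = \{x\in V'\mid\exists y.\,(c,x,y)\in E'\text{ or }(c,y,x)\in E'\}$. Rule~(\ref{pr:exists}) together with the top- and nonempty-rules forces $X_c$ nonempty; rule~(\ref{pr:ident}) confines $X_c$ to a single $\ident$-equivalence class $C_c$; rule~(\ref{pr:crossprod}) pins the $c$-edges down to exactly $X_c\times X_c$; and rule~(\ref{pr:neq}) together with conflict-freeness of $G'$ forces the classes $C_c$ for distinct constants to be disjoint. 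Third, I would patch $f_0$ into $f$ by sending every vertex of $C_c$ to the symbol $c$ (for each $c\in\mathcal{C}$) and leaving other classes untouched; then I set $G = \hat{f}(G')$ and pick any $g:V\to V'$ with $f(g(v))=v$ for all $v\in V$ and with $g(c)\in X_c$ for each $c\in\mathcal{C}$.

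To finish I would verify the three conclusions. (i) Since $f\circ g$ is the identity on $V$, we get immediately $G = \hat{f}(\hat{g}(G)) = \hat{f}(G')$, and $\hat{g}(G)\xrightarrow{\subseteq}G'$ follows from Identity-rule~(\ref{idt:eachl}) in $G'$ by rewriting each edge $(l,x,y)\in E'$ along $\ident$-edges that connect $x$ to $g(f(x))$ and $y$ to $g(f(y))$. (ii) Standardness of $G$ can be read off the corresponding rules in $G'$: the nonempty- and top-rules transfer to $V\neq\{\}$ and $\sem{\top}{G} = V\times V$; $\sem{\ident}{G}$ is the diagonal, since the renaming collapses only vertices that were already $\ident$-equivalent in $G'$; $\sem{\bot}{G}=\{\}$ because any $\bot$-edge in $G$ lifts through $g$ to a $\bot$-edge in $G'$, contradicting conflict-freeness; and the constant analysis directly yields $\sem{c}{G} = \{(c,c)\}$. (iii) For each $\mathbb{e}_L \sqsubseteq \mathbb{e}_R\in\mathcal{T}$, the translation is maintained in $G'$ by hypothesis, so the \tr{} holds in $G'$ by Lemma~\ref{lem:translatesound}; given $(v_0,v_1)\in\sem{\mathbb{e}_L}{G}$, push the embedding of $\transL(\mathbb{e}_L)$ supplied by Lemma~\ref{lem:translate} through $g$ to land $(g(v_0),g(v_1))$ in $\sem{\mathbb{e}_L}{G'}$, apply the \tr{} in $G'$ to obtain a witness in $\sem{\mathbb{e}_R}{G'}$, and pull the resulting embedding of $\transL(\mathbb{e}_R)$ back through $f$, using $f(g(v_i)) = v_i$ to keep the endpoints at $(v_0,v_1)$.

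The main obstacle is the constant bookkeeping in the patched map $f$: I need simultaneously that each $c$-class $C_c$ is nonempty, that distinct $C_c$ are pairwise disjoint (this is exactly where conflict-freeness of $G'$ combined with rule~(\ref{pr:neq}) is essential), and that replacing $f_0$'s representative on these classes by the constant symbol preserves both idempotency and the diagonal property of Lemma~\ref{lem:identrules}. Once that is in place, everything else transports cleanly from $G'$ to the quotient $G$.
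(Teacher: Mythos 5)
Your proposal takes essentially the same route as the paper's own proof: quotient $G'$ by the $\ident$-classes via Lemma~\ref{lem:identrules}, use the constant-rules together with the top-/nonempty-rules and conflict-freeness to obtain nonempty, pairwise disjoint constant classes, rename so that each such class becomes its constant symbol, and transfer standardness and the \tr{}s back through $f$ and $g$ using Lemmas~\ref{lem:translate} and~\ref{lem:translatesound}. The only slips are minor: confining $X_c$ to a single $\ident$-class and getting all of $X_c\times X_c$ as $c$-edges needs rule~(\ref{pr:crossprod}), the top-rule and identity-rule~(\ref{idt:eachl}) in combination with rule~(\ref{pr:ident}), not rule~(\ref{pr:ident}) alone, and your ``patched'' $f$ needs a harmless preliminary renaming of $V'$ away from $\mathcal{C}$ so that a vertex accidentally named $c$ outside $C_c$ cannot collide with the constant --- the paper avoids this by relabelling through a bijection $m'$ onto a fresh vertex set containing $\mathcal{C}$.
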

\begin{proof}
We begin the proof by constructing $G$ and $f$, based on $G' = (\mathcal{L},E',V')$.
By Lemma~\ref{lem:identrules}, there is an idempotent function $h$ with  $\sem{\ident}{\hat{h}(G')} = \{(h(x),h(x)) \mid x \in V'\}$.
Top- and nonempty-rules are maintained in $G'$, so by constant-rule~(\ref{pr:exists}), there are vertices $v_1,v_2\in V'$ with $(c,v_1,v_2)\in E'$ for each $c\in\mathcal{C}$.
Let $m : \mathcal{C} \to V'$ such that for each $c$, $\exists v\in V'.\ (c,m(c),v)\in E'$, therefore $\exists v\in V'.\ (c,h(m(c)),h(v))\in E'$.
Using constant-rule~(\ref{pr:ident}), it follows that $h(m(c)) = h(v)$, so $(c,h(m(c)),h(m(c))) \in E'$.
From constant-rule~(\ref{pr:neq}) and that $G'$ is conflict free, we get $(c_1,h(m(c_2)),h(m(c_2)))\not\in E'$ iff $c_1\neq c_2$.
We conclude that $h \circ m$, the function that maps $c\in\mathcal{C}$ to $h(m(c))$, is injective, so $m$ is injective.
Therefore, there is a $V$ and an $m'$ with $\mathcal{C}\subseteq V$ such that $m' : V \to V'$ is bijective and $m(c) = m'(c)$ for $c\in \mathcal{C}$.
Let $f = h \circ m'$, defining $G = \hat{f}(G')$.
Let $g$ be the inverse of $m'$, giving $\hat{f}(\hat{g}(G)) = \hat{h}(G) = G$ since $h$ is idempotent.
We have $\sem{\ident}G = \{(x,x) \mid x\in V\}$ by our choice of $h$.
Also $G$ is a consequence graph of $\mathcal{R}$ since $G = \hat{f}(G')$ and $G'$ is a consequence graph of $\mathcal{R}$.
From $(c,h(m(c)),h(m(c))) \in E'$ we get $(c,c)\in\sem{c}G$.
Using constant-rule~(\ref{pr:crossprod}) and constant-rule~(\ref{pr:ident}), now with $\sem{\ident}G = \{(x,x) \mid x\in V\}$, we get $\{(c,c)\}=\sem{c}G$.
All properties now follow.\qed
\end{proof}

\section{A Procedure to Find a Standard Graph}\label{sec:procedure}
A set of \tr{}s is satisfiable if and only if there is no $i$ such that $S(i)$ contains a conflict in a corresponding fair \wpc{}.
This follows from the previous sections as follows:
Given a set of \tr{}s $\mathcal{T}'$ with relation symbols $\mathcal{L}$, Lemma~\ref{lem:subseteq} shows that we can find an equivalent set of \tr{}s $\mathcal{T}$ such that each \tr{} is of the shape $\_ \sqsubseteq \_$.
We derive a set of \gr{}s $\mathcal{R}$ that includes the standard-rules and the translation of the \tr{}s in $\mathcal{T}$.
By making a fair \wpc{} $S$ starting in the empty graph, we obtain a supremum that is a least consequence graph of $\mathbb{0}_\mathcal{L}$ maintaining $\mathcal{R}$.
If this graph contains a conflict, then any graph maintaining $\mathcal{R}$ will, so $\mathcal{T}'$ is unsatisfiable.
If not, we can apply Lemma~\ref{lem:existsgraph} to find a model for $\mathcal{T}'$.
In this section, we look at constructing fair \wpc{}s, based on a set of \gr{}s $\mathcal{R}$ that include the standard-rules.

\subsection{An Algorithm for Fair \WPC{}s}\label{sec:algorithm}
Assume that the set of \tr{}s $\mathcal{T}$ is finite.
Consequently, only finitely many relation symbols $\mathcal{L}$ are used in those \tr{}s.
We restrict $\mathcal{L}$ to those relation symbols that are actually used in $\mathcal{T}$.
This makes the corresponding set of \gr{}s $\mathcal{R}$ (including the standard-rules) finite.
Thus, we can construct a fair \wpc{} for $\mathcal{R}$.
Algorithm~\ref{alg:fairpushoutchain} gives a procedure for this.

\begin{algorithm}
\SetKwInOut{Input}{Input}
\SetKwInOut{Effect}{Effect}
\underline{function ProduceChain} $(n \in \mathbb{N},\mathcal{L},E,\mathcal{R})$\;
\Input{A set of edges $E$ such that $G = (\mathcal{L},\{i\mid i\in\mathbb{N},\ i < n\},E)$ is a graph. A finite set of finite \gr{}s $\mathcal{R}$ with relation symbols $\mathcal{L}$.}
\Effect{Produces an infinite list of graphs that are a fair \wpc{} starting in $G$.}
Let $G = (\mathcal{L},\{i\mid i\in\mathbb{N},\ i < n\},E)$, produce $G$\;
Let $W = \{\}$ be our worklist\;
\For{$(L,R)\in\mathcal{R}$}{
  Take $V$ such that $(\_,V,\_)\in L$\;
  \For{$f$ such that $L\xrightarrow{f}G$}{
    \If{There is no $g$ such that $R\xrightarrow{g}G$ with $\forall v\in V .\ f(v) = g(v)$}{
      Let $N\in\mathbb{N}$ be the maximum of $f(v)$\;
      Add $(N,L,R,f)$ to $W$\;
    }
  }
}
\eIf{$W$ is empty}{
  ProduceChain$(n,\mathcal{L},E,\{\})$\label{lnr:stop}\;
}{
  Take $(N,L,R,f)\in W$ such that $N$ is minimal\label{lnr:pickGraph}\;
  Take $V, V'$ such that $(\_,V,\_)= L$ and $(\_,V',\_)= R$\;
  Let $\Delta = V' - V$\;
  Let $V'' = \{i\mid  i\in\mathbb{N},\ i<n+|\Delta|\}$\;
  Take $g : V' \rightarrow V''$ such that $g(v) = f(v)$ for $v\in V$
   and $g(\delta) \geq n$ for $\delta\in\Delta$ such that $g(\delta_1) \neq g(\delta_2)$ if $\delta_1\neq\delta_2$ for $\delta_1,\delta_2 \in \Delta$\;
  Take $E'$ such that $(\mathcal{L},V'',E') = G \cup g(R)$\label{lnr:makeGraph}\;
  ProduceChain$(n+|\Delta|,\mathcal{L},E',\mathcal{R})$\;
}
\caption{Construct a fair \wpc{} starting in its input}\label{alg:fairpushoutchain}
\end{algorithm}

\begin{lemma}\label{lem:algSound}
Algorithm~\ref{alg:fairpushoutchain} constructs a fair \wpc{} starting in $G$ under $\mathcal{R}$, the limit of which is a least consequence graph of $G$ under $\mathcal{R}$.
\end{lemma}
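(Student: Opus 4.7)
The plan is to show that Algorithm~\ref{alg:fairpushoutchain} produces a sequence $S$ that is (i) a chain, (ii) a simple \wpc{} under $\mathcal{R}$ starting in $G$, and (iii) fair for $\mathcal{R}$. Once these are established, Lemma~\ref{lem:fairchain} gives that $S(\infty)$ is a consequence graph maintaining $\mathcal{R}$, and Lemma~\ref{lem:pushoutchain} upgrades this to a least consequence graph of $S(0)=G$.

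Properties (i) and (ii) are mostly unpacking the construction. When $W$ is empty the recursive call reproduces the same graph, and otherwise line~\ref{lnr:makeGraph} sets $S(i+1)=S(i)\cup\hat{g}(R)$, so $S(i)\xrightarrow{\subseteq}S(i+1)$ in either case. The embeddings $L\xrightarrow{f}S(i)$ and $R\xrightarrow{g}S(i+1)$ agree on $V_L$ by construction of $g$. For the universal \wps{} condition, given compatible embeddings $S(i)\xrightarrow{f'}H$ and $R\xrightarrow{g'}H$ with $f'\circ f = g'|_{V_L}$, I would define $h$ on $V_{S(i+1)}=V_{S(i)}\cup g(\Delta)$ by $h(v)=f'(v)$ for $v\in V_{S(i)}$ and $h(g(\delta))=g'(\delta)$ for $\delta\in\Delta$. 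Well-definedness follows from the algorithm's choice that $g(\Delta)$ consists of fresh labels ($\ge n$) and that $g$ is injective on $\Delta$. A short case split on the two kinds of edges in $S(i)\cup\hat{g}(R)$ shows $\hat{h}$ embeds $S(i+1)$ into $H$, using the compatibility hypothesis for the edges contributed by $\hat{g}(R)$.

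The main obstacle is fairness. Suppose for contradiction that some embedding $L\xrightarrow{f}S(i_0)$ is never resolved, i.e., no later $S(j)$ admits an embedding $R\xrightarrow{g}S(j)$ with $g|_{V_L}=f$. Let $N_f=\max_{v\in V_L}f(v)$ and $k_{\max}=\max_{(L',R')\in\mathcal{R}}|V_{L'}|$. Since $\mathcal{R}$ is finite, the number of triples $(L',R',f')$ with $(L',R')\in\mathcal{R}$ and $\max f'(V_{L'})\le N_f$ is bounded by $B=|\mathcal{R}|\cdot(N_f+1)^{k_{\max}}$. The crucial monotonicity observation is that once such a triple is selected at line~\ref{lnr:pickGraph}, the resulting \wps{} provides a witness that resolves it, and since $S$ is a chain this witness survives in every later graph, so the triple never re-enters the worklist. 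Consequently, the algorithm performs at most $B$ selections with $N\le N_f$ during the whole run. But by hypothesis $(L,R,f)$ sits in the worklist with priority $N_f$ at every step $i\ge i_0$, so by the minimum-$N$ rule the algorithm selects some triple with $N\le N_f$ at each such step. After at most $B$ such steps the pool of other eligible triples is exhausted, forcing the algorithm to pick $(L,R,f)$ itself---contradicting the assumption that it is never resolved.

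Combining (i), (ii), and (iii), $S$ is a fair simple \wpc{} for $\mathcal{R}$ with $S(0)=G$, so Lemmas~\ref{lem:fairchain} and~\ref{lem:pushoutchain} together yield that $S(\infty)$ is a least consequence graph of $G$ maintaining $\mathcal{R}$.
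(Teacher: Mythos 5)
Your proposal is correct and follows essentially the same route as the paper's proof: the chain and \wps{} properties come from unpacking Line~\ref{lnr:makeGraph}, fairness follows from the finiteness of embeddings with bounded maximal vertex together with the minimal-$N$ selection rule, and Lemmas~\ref{lem:fairchain} and~\ref{lem:pushoutchain} finish the argument. You merely spell out in more detail what the paper leaves terse, namely the explicit verification of the universal \wps{} condition and the counting bound behind the fairness argument.
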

\begin{proof}
The algorithm constructs a \wpc{}, because the graph constructed on Line~\ref{lnr:makeGraph} is part of a \wps{} for a \gr{} in $\mathcal{R}$.
Let $S : \mathbb{N} \to \mathbb{G}_\mathcal{L}$ describe the \wpc{} generated (with $S(0) = G$).
Pick an arbitrary $N$.
Since the set of \gr{}s is finite, also the number of functions $f$ with $f(v) \leq N$ that embed left-hand sides of \gr{}s into $S(\infty)$ is finite.
For some $i$, all such embeddings are in $S(i)$.
If an embedding is picked on Line~\ref{lnr:pickGraph}, there is a $g$ such that $R\xrightarrow{g}S(\infty)$, since such a $g$ is added on Line~\ref{lnr:makeGraph}.
Therefore, for each embedding $f$ with $f(v) \leq N$ such that $L\xrightarrow{f}G$, there is a $g$ such that $R\xrightarrow{g}S(\infty)$ with $f(v)=g(v)$.
The domain for every such $f$ is finite, so we can pick an $N$ for every $f$ such that $f(v) \leq N$.
Therefore, the \wpc{} is fair.
Lemma~\ref{lem:fairchain} and~\ref{lem:pushoutchain} complete the proof.
\qed
\end{proof}

The algorithm can be changed into a semi-decision procedure to decide whether the limit contains a conflict:
If $G$ contains a conflict, then any limit in which $G$ occurs will contain the conflict.
Therefore, if we are only interested in whether the limit has a conflict, we can abort the algorithm as soon as $G \cup g(R)$ in Line~\ref{lnr:makeGraph} has a conflict.
Vice versa, if the limit has a conflict, then there will be a graph $G$ in some iteration of the algorithm that has that conflict.
This gives a semi-decision procedure.
We can use this to decide consistency, using $\mathbb{0}_\mathcal{L}$ as the initial graph.

The same procedure can be used to prove entailment.
Say we wish to determine if $\mathcal{T}$ entails $\phi$ for a problem on a standard set of labels $\mathcal{L}$, for $\phi$ equal to $\mathbb{e}_1 \sqsubseteq \mathbb{e}_2$.
Assume without loss of generality that $l\not\in\mathcal{L}$.
We introduce a new label $l$: $\mathcal{L}' = \mathcal{L} \cup \{l\}$.
Let $\mathcal{T}' = \mathcal{T} \cup \{l \sqsubseteq \mathbb{e}_1, \mathbb{e}_2 \sqcap l \sqsubseteq \bot\}$.
Let $\mathcal{R}$ be the standard rules plus the derived rules of $\mathcal{T}'$.
This time, run the algorithm with $(\mathcal{L}',\{0,1\},\{(l,0,1)\})$ as the initial graph: we obtain a least consequence graph maintaining $\mathcal{R}$.
If this graph does not contain a conflict, there is a standard graph $G$ in which $\mathcal{R}$ is maintained, and therefore $\mathcal{T}$ holds in $G$, but $\phi$ does not hold as $\sem{l}G \subseteq \sem{\mathbb{e}_1}G$ but $\sem{l}G \cap \sem{\mathbb{e}_1}G = \{\}$ for $\sem{l}G$ nonempty, since $(\mathcal{L}',\{0,1\},\{(l,0,1)\}){}\ \xrightarrow{} G$.
If the obtained graph does contain a conflict, then all consequence graphs of $\mathcal{R}$ with nonempty $l$ contain a conflict.
Suppose $G$ is standard, each of $\mathcal{T}$ holds, there is a pair (labeled $l$) in $\sem{\mathbb{e}_1}G$, and that pair is not in $\sem{\mathbb{e}_2}G$, then we get a contradiction to the statement that all consequence graphs of $\mathcal{R}$ with nonempty $l$ contain a conflict.
In other words: for each standard $G \vDash \mathcal{T}$, we have $\sem{\mathbb{e}_1}G \subseteq \sem{\mathbb{e}_2}G$, so $\mathcal{T}$ entails $\phi$.
This shows that a least consequence graph can be used to decide entailment.
By terminating our procedure when a conflict is found, we can prove entailment if it holds (and do not terminate otherwise).
This can be extended to $\phi$ of the shape $\mathbb{e}_1 = \mathbb{e}_2$, by applying this procedure to both $\mathbb{e}_1 \sqsubseteq \mathbb{e}_2$ and $\mathbb{e}_2 \sqsubseteq \mathbb{e}_1$.

There is another case in which we can abort: once the graph maintains all \gr{}s in $\mathcal{R}$, we hit Line~\ref{lnr:stop}, and $G$ is equal to the limit.
In such a case, we have found the limit of the chain given by Algorithm~\ref{alg:fairpushoutchain}, and can immediately decide whether or not it is conflict free.
Unfortunately, even if conflict free graphs that maintain all \gr{}s exist (so by definition of least consequence graph, the limit is conflict free), we do not necessarily hit this case.
Section~\ref{sec:undecidable} shows that we cannot hope to find an algorithm that decides whether or not a conflict free consequence graph exists.

\subsection{Optimizations for Implementations}
We discuss some possible optimizations for the purpose of showing correctness of the algorithm described by the author in an earlier paper~\cite{amperspiegelRAMICS}.
The earlier algorithm is not Algorithm~\ref{alg:fairpushoutchain}, but an optimized version thereof.
We only describe a few optimizations, that suffice to show that the algorithm presented earlier is correct as well.

As optimizations, we allow changing the outcome of the algorithm, but require that the proof of Lemma~\ref{lem:algSound} remains valid.
In particular, instead of the graph $G \cup g(R)$ constructed on Line~\ref{lnr:makeGraph}, we can make a larger graph $S(\infty)$ if $G \cup g(R) \subseteq S(\infty)$ and $S(\infty)$ is the limit of a (not necessarily fair) \wpc{}.
Through this change, the algorithm no longer constructs simple \wpc{}s, but Lemma~\ref{lem:algSound} still holds.

As an instance of this, observe that we can combine \gr{}s, as this is a form of combining \wps{}s:
suppose $(L,R)$ and $(L',R')$ are \gr{}s in $\mathcal{R}$, such that $L' \xrightarrow{f} R$.
Then we can find an $R''$ such that $(R,R'')$ is a \wps{} of $(L',R')$.
We can then safely replace the \gr{} $(L,R)$ for $(L,R'')$ in $\mathcal{R}$, as a \wps{} of $(L,R'')$ is the limit of a chain that satisfies the aforementioned condition.

Apart from changing the set of \gr{}s $\mathcal{R}$, we can change the algorithm such that the standard-rules are always maintained after each step.
Let $G'$ be a graph constructed in that way.
According to Lemma~\ref{lem:existsgraph}, we represent the graph $G'$ by the graph $\hat{f}(G')$, making it such that we do not need to store the relation-symbols $\top$, $\bot$, $\ident{}$, or the constants in $\mathcal{C}$.
We do need to keep track of which vertices originally belong to which equivalence classes, in order to be able to produce the underlying $G'$ in each step.
Since the function $f$ possibly maps several vertices of $G'$ to one vertex in $\hat{f}(G')$, the original graph $\hat{f}_{i}(G'_{i})$ is not necessarily a subgraph of the newly generated graph $\hat{f}_{i+1}(G'_{i+1})$.
On the other hand, if we are only interested in whether or not there is a conflict in the least consequence graph, then we only need to keep track of the least vertex of each class such that the $N$ chosen on Line~\ref{lnr:pickGraph} corresponds to a minimal embedding of $f$.
This is precisely the algorithm proposed in the earlier paper~\cite{amperspiegelRAMICS}, showing it is a semi-decision procedure for deciding whether a least consequence graph contains a conflict.

\subsection{Example Run of the Optimized Algorithm}
We return to one of the examples given in Section~\ref{sec:background}: the entailment problem that asks whether $\mathcal{T} = \{{\tt r} = {\tt i}\fcmp{\tt i}\converse{}\ ,\ {\tt i}\converse{} \fcmp {\tt i} \sqsubseteq \ident{}\}$ entails ${\tt r}\fcmp{\tt r} \sqsubseteq {\tt r}$.
We construct a $\mathcal{T}'$ for the entailment problem as described in Section~\ref{sec:algorithm}:
\begin{align*}
\mathcal{T}' = \{\ & {\tt r} \sqsubseteq {\tt i}\fcmp{\tt i}\converse{}\ ,\ {\tt i}\fcmp{\tt i}\converse{} \sqsubseteq {\tt r} \ ,\  {\tt i}\converse{} \fcmp {\tt i} \sqsubseteq \ident{} \ ,\  l \sqsubseteq {\tt r}\fcmp{\tt r} \ ,\ l \sqcap {\tt r} \sqsubseteq \bot\ \}
\end{align*}
Using the translation of Section~\ref{sec:Rules}, Figure~\ref{fig:exgr} gives the \gr{}s we work with.
We use the optimizations just described, and do not restate the standard-rules.

\begin{figure}[htbp]
\centering
\begin{subfigure}[b]{0.19\textwidth}\centering
\begin{tikzpicture}[-latex,thin]
\node[](c) {};
\node[vertex,inner sep=2pt](0) [left = 10pt of c] {};
\node[vertex,inner sep=2pt](1) [right = 10pt of c] {};
\node[vertex](2) [below =of c] {};
\path (0) edge[thick] node[label=above:$\tt r$] {} (1);
\path (0) edge[dotted] node[label=below:$\tt i$] {} (2);
\path (1) edge[dotted] node[label=below:$\tt i$] {} (2);
\end{tikzpicture}
\caption{${\tt r} \sqsubseteq {\tt i}\fcmp{\tt i}\converse{}$}
\end{subfigure}
\begin{subfigure}[b]{0.19\textwidth}\centering
\begin{tikzpicture}[-latex,thin]
\node[](c) {};
\node[vertex,inner sep=2pt](0) [left = 10pt of c] {};
\node[vertex,inner sep=2pt](1) [right = 10pt of c] {};
\node[vertex,inner sep=2pt](2) [below =of c] {};
\path (0) edge[dotted] node[label=above:$\tt r$] {} (1);
\path (0) edge[thick] node[label=below:$\tt i$] {} (2);
\path (1) edge[thick] node[label=below:$\tt i$] {} (2);
\end{tikzpicture}
\caption{${\tt i}\fcmp{\tt i}\converse{} \sqsubseteq {\tt r} $}
\end{subfigure}
\begin{subfigure}[b]{0.19\textwidth}\centering
\begin{tikzpicture}[-latex,thin]
\node[](c) {};
\node[vertex,inner sep=2pt](0) [left = 10pt of c] {};
\node[vertex,inner sep=2pt](1) [right = 10pt of c] {};
\node[vertex,inner sep=2pt](2) [below =of c] {};
\path (0) edge[dotted] node[label=above:$\ident{}$] {} (1);
\path (2) edge[thick] node[label=below:$\tt i$] {} (0);
\path (2) edge[thick] node[label=below:$\tt i$] {} (1);
\end{tikzpicture}
\caption{${\tt i}\converse{}\fcmp{\tt i} \sqsubseteq \ident{} $}
\end{subfigure}
\begin{subfigure}[b]{0.19\textwidth}\centering
\begin{tikzpicture}[-latex,thin]
\node[](c) {};
\node[vertex,inner sep=2pt](0) [left = 10pt of c] {};
\node[vertex,inner sep=2pt](1) [right = 10pt of c] {};
\node[vertex](2) [below =of c] {};
\path (0) edge[thick] node[label=above:$l$] {} (1);
\path (0) edge[dotted] node[label=below:$\tt r$] {} (2);
\path (2) edge[dotted] node[label=below:$\tt r$] {} (1);
\end{tikzpicture}
\caption{${\tt l} \sqsubseteq {\tt r}\fcmp{\tt r}$}
\end{subfigure}
\begin{subfigure}[b]{0.19\textwidth}\centering
\begin{tikzpicture}[-latex,thin]
\node[vertex,inner sep=2pt](0) {}; 
\node[vertex,inner sep=2pt](1) [right =of 0]{}; 
\path (0) edge[thick,bend left=90] node[label={[label distance=-3]$l$}] {} (1);
\path (0) edge[thick] node[label={[label distance=-3]$\tt r$}] {} (1);
\path (0) edge[dotted,bend right=90] node[label={[label distance=-3]below:$\bot$}] {} (1);
\end{tikzpicture}
\caption{$l \sqcap {\tt r} \sqsubseteq \bot$}
\end{subfigure}
\caption{\GR{}s for $\mathcal{T}'$}\label{fig:exgr}
\end{figure}

We start the procedure with $n = 2$ and $E = \{(l,0,1)\}$.
Note that per our optimizations, the self loops $(\ident{},0,0)$ and $(\ident{},1,1)$ are implicitly there, as well as all $\top$ edges.
Only one rule does not hold: $l \sqsubseteq {\tt r}\fcmp{\tt r}$, and consequently only one \gr{} is not maintained.
A pushout step for it gives $n = 3$ and $E = \{(l,0,1), ({\tt r},0,2), ({\tt r},2,1)\}$ as the next call to ProduceChain.
Again only one rule is not maintained, the one for ${\tt r} \sqsubseteq {\tt i}\fcmp{\tt i}\converse{}$.
This time, our work-list contains two elements: one for each edge labeled $\tt r$.
Both have a maximum node number of $2$, so we can choose either.
We pick $f$ that maps to $({\tt r},0,2)$, and $n = 4$ and $E = \{(l,0,1), ({\tt r},0,2), ({\tt r},2,1), ({\tt i},0,3), ({\tt i},2,3)\}$.
This time, ${\tt i}\fcmp{\tt i}\converse{} \sqsubseteq {\tt r}$ is also not maintained: $({\tt i},0,3)$ but $({\tt r},0,0)$ is missing.
The highest node number assigned to $N$ is $3$ however, so we need to finish treating ${\tt r} \sqsubseteq {\tt i}\fcmp{\tt i}\converse{}$.
Next iteration: $n = 5$ and $E = \{(l,0,1), ({\tt r},0,2), ({\tt r},2,1), ({\tt i},0,3), ({\tt i},1,4), ({\tt i},2,3), ({\tt i},2,4)\}$.
Subsequently: $n = 5$ and $E = \{(l,0,1), ({\tt r},0,2), ({\tt r},2,1), ({\tt i},0,3), ({\tt i},1,4), ({\tt i},2,3), ({\tt i},2,4), ({\tt r},0,0)\}$, then $({\tt r},2,2)$ is added.
At this point, we have a choice again, between ${\tt i}\fcmp{\tt i}\converse{} \sqsubseteq {\tt r}$ and ${\tt i}\converse{}\fcmp{\tt i} \sqsubseteq \ident{}$.
We apply the former first: after several iterations it gives us the graph that satisfies all rules except ${\tt i}\converse{}\fcmp{\tt i} \sqsubseteq \ident{}$:
\begin{align*}
\sem{l}G &= \{(0,1)\} \\
\sem{\tt i}G &= \{(0,3),(1,4),(2,3),(2,4)\} \\
\sem{\tt r}G &= \{(0,0),(0,2),(1,1),(1,2),(2,0),(2,1),(2,2)\}
\end{align*}
Since we did not use ${\tt i}\converse{}\fcmp{\tt i} \sqsubseteq \ident{}$ yet, and all other rules are satisfied up to this point, we are exactly in the place we would have been if ${\tt i}\converse{}\fcmp{\tt i} \sqsubseteq \ident{}$ wasn't present.
This is (minus the $l$) the graph given in Section~\ref{sec:background} as a possible graph our algorithm could give.
If we would have handled $({\tt r},2,1)$ before $({\tt r},0,2)$ instead, we would have gotten a graph with a different numbering.

We now proceed by applying ${\tt i}\converse{}\fcmp{\tt i} \sqsubseteq \ident{}$.
The pushout step adds $(\ident{},3,4)$.
We have not described precisely how our optimizations proceed at this point, but we need to renumber the nodes such that $3$ and $4$ are identified.
For preserving fairness, we renumber high to low: the node $4$ is relabeled to $3$.
This can cause some pushout steps to get assigned a lower $N$, but never a higher one.
We proceed with the graph $G'$:
\begin{align*}
\sem{l}{G'} &= \{(0,1)\} \\
\sem{\tt i}{G'} &= \{(0,3),(1,3),(2,3)\} \\
\sem{\tt r}{G'} &= \{(0,0),(0,2),(1,1),(1,2),(2,0),(2,1),(2,2)\}
\end{align*}
At this point, ${\tt i}\fcmp{\tt i}\converse{} \sqsubseteq {\tt r}$ does not hold, and the resulting action is to insert $({\tt r},0,1)$.
Subsequently, $l \sqcap {\tt r} \sqsubseteq \bot$ does not hold and we insert a conflict.
We abort concluding that the entailment holds.

While we needed several iterations to conclude entailment, we saved many iterations by treating the standard rules separately.
If we had applied ${\tt i}\converse{}\fcmp{\tt i} \sqsubseteq \ident{}$ earlier, we would have derived the contradiction sooner.

\subsection{Presentation of the Algorithm}
We conclude this section with a note on the presentation in this paper.
In the earlier paper, we presented the efficient implementation~\cite{amperspiegelRAMICS} as discussed in the previous paragraph.
This does not allow us to talk about the limit of the procedure.
Using the same presentation would have alleviated the need for Lemma~\ref{lem:identrules}.
However, the simpeler presentation used in this paper allows us to argue that the limit of a chain always exists.
This simplifies many of the other proofs in this paper.

We give an example that shows why it is problematic to describe limits in the more involved presentation:
Given the \gr{}s $\mathcal{E}\left(l \sqsubseteq l\fcmp l\right)$, $\mathcal{E}\left(l \sqsubseteq \ident{}\right)$ and the identity-rules, Figure~\ref{fig:wpcloop} shows a part of a \wpc{}.
Following the procedure for the given rules, we obtain the graphs in Figure~\ref{fig:wpcstep1} and~\ref{fig:wpcstep2}.
After every step, we could decide to apply the identity-rules until they are maintained.
If we construct a chain like this, and proceed in a similar manner as illustrated in Figure~\ref{fig:wpcloop}, we indeed construct a fair \wpc{}.
The limit of this chain is an infinite graph in which every two vertices are connected by an edge labeled $l$, as well as an edge labeled $\ident{}$, which indeed maintains the \gr{}s.
If we apply the mentioned optimizations and choose a representation of the graph as intended in Lemma~\ref{lem:identrules} after each graph, defining the `limit' becomes problematic:
We do not need to draw edges with the label $\ident$, as they are given by the drawn vertices, and the graph representation after the step in Figure~\ref{fig:wpcstep2} is drawn in Figure~\ref{fig:wpcend}.
This graph is isomorphic to the one we started with, showing we end up in a sequence that alternates between two graphs.
None of these graphs maintains any of the given \gr{}s, despite the `underlying' chain being fair.
Since a well defined limit is an important concept in many lemmas, we chose to use chains as described in this paper.

\begin{figure}[tbp]
\centering
\begin{subfigure}[b]{0.19\textwidth}\centering
\begin{tikzpicture}[-latex,semithick]
\node[vertex,label=above:$a$](1) {};
\node[vertex,label=above:$b$](2) [right =of 1] {};
\node[vertex,label=below:$c$](3) [below =of 2] {};
\path (1) edge[bend left] node[label=above:$l$] {} (2);
\path (2) edge[bend left] node[below=12pt,label=:$l$] {} (1);
\path (3) edge[bend left] node[label=below:$l$] {} (1);
\path (2) edge node[label=right:$l$] {} (3);
\path (1) edge [loop left,in=135,out=-135,distance=1cm] node[label=below:$l$] {} (1);
\end{tikzpicture}
\caption{Starting graph}\label{fig:wpcstart}
\end{subfigure}
\begin{subfigure}[b]{0.3\textwidth}\centering
\begin{tikzpicture}[-latex,thin]
\node[vertex,label=above:$a$](1) {}; 
\node[vertex,label=above:$b$,inner sep=2pt](2) [right =of 1] {};
\node[vertex,label=below:$c$,inner sep=2pt](3) [below =of 2] {};
\path (1) edge[bend left] node[label=above:$l$] {} (2);
\path (2) edge[bend left] node[below=12pt,label=:$l$] {} (1);
\path (3) edge[bend left] node[label=below:$l$] {} (1);
\path (2) edge[thick] node[label=right:$l$] {} (3);
\path (1) edge [loop left,in=135,out=-135,distance=1cm] node[label=below:$l$] {} (1);
\node[vertex,label=below:$d$](x) [right=of 2] {};
\path (2) edge[dotted] node[label=above:$l$] {} (x);
\path (x) edge[dotted] node[label=below:$l$] {} (3);
\end{tikzpicture}
\caption{First step}\label{fig:wpcstep1}
\end{subfigure}
\begin{subfigure}[b]{0.3\textwidth}\centering
\begin{tikzpicture}[-latex,thin]
\node[vertex,label=above:$a$,inner sep=2pt](1) {};
\node[vertex,label=above:$b$](2) [right =of 1] {};
\node[vertex,label=below:$c$,inner sep=2pt](3) [below =of 2] {};
\path (1) edge[bend left] node[label=above:$l$] {} (2);
\path (2) edge node[below=10pt,label=:$l$] {} (1);
\path (3) edge[bend left,thick] node[label=below:$l$] {} (1);
\path (3) edge[dotted] node[below right=6pt,label=$\ident$] {} (1);
\path (2) edge node[below right=2pt,label=:$l$] {} (3);
\path (1) edge [loop left,in=135,out=-135,distance=1cm] node[label=below:$l$] {} (1);
\node[vertex,label=below:$d$](x) [right=of 2] {};
\path (2) edge node[label=above:$l$] {} (x);
\path (x) edge node[label=below:$l$] {} (3);
\end{tikzpicture}
\caption{Second step}\label{fig:wpcstep2}
\end{subfigure}
\begin{subfigure}[b]{0.19\textwidth}\centering
\begin{tikzpicture}[-latex,semithick]
\node[vertex,label=above:$a$](1) {};
\node[vertex,label=above:$b$](2) [right =of 1] {};
\node[vertex,label=below:$d$](3) [below =of 2] {};
\path (1) edge[bend left] node[label=above:$l$] {} (2);
\path (2) edge[bend left] node[below=12pt,label=:$l$] {} (1);
\path (3) edge[bend left] node[label=below:$l$] {} (1);
\path (2) edge node[label=right:$l$] {} (3);
\path (1) edge [loop left,in=135,out=-135,distance=1cm] node[label=below:$l$] {} (1);
\end{tikzpicture}
\caption{Standardized}\label{fig:wpcend}
\end{subfigure}
\caption{Some \wps{}s}\label{fig:wpcloop}
\end{figure}

\section{A Proof of Undecidability}\label{sec:undecidable}
\begin{lemma}
The following decision problem is undecidable:
given a set of \tr{}s $\mathcal{T}$, is there a standard graph $G$ in which every \tr{} in $\mathcal{T}$ holds?
\end{lemma}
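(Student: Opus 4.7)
The plan is to reduce the word problem for finitely presented semigroups, which is undecidable by the Markov--Post theorem, to inconsistency of the given problem. Given a finite presentation $\langle a_1, \dots, a_n \mid u_1 = v_1, \dots, u_m = v_m \rangle$ and two non-empty words $p, q$ over the generators, I would build a set of \tr{}s $\mathcal{T}_{p,q}$ such that $\mathcal{T}_{p,q}$ is consistent if and only if $p \neq q$ in the presented semigroup.

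First, take as relation symbols the generators $a_1, \dots, a_n$, and translate each non-empty word $w = w_1 \cdots w_k$ into the term $r(w) := w_1 \fcmp \cdots \fcmp w_k$. For every defining relation $u_i = v_i$, include the \tr{} $r(u_i) = r(v_i)$ in $\mathcal{T}_{p,q}$. To express ``$p$ and $q$ denote different relations'', introduce two constants $c_1, c_2 \in \mathcal{C}$ and a fresh relation symbol $l$, and add the three \tr{}s $c_1 \fcmp \top \fcmp c_2 \sqsubseteq l$, $l \sqsubseteq r(p)$, and $l \sqcap r(q) \sqsubseteq \bot$. In any standard model $G$ their joint effect is that the pair of distinguished vertices $(c_1, c_2)$ must lie in $\sem{r(p)}{G}$ but outside $\sem{r(q)}{G}$.

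The forward direction of the reduction is routine: if $p = q$ is derivable from the defining equations by substitution and associativity, an easy induction on the derivation---using that relational composition is associative and preserves equality---shows $\sem{r(p)}{G} = \sem{r(q)}{G}$ in every graph $G$ satisfying the defining \tr{}s, whence the last three \tr{}s become jointly contradictory and $\mathcal{T}_{p,q}$ is inconsistent.

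The main obstacle is the backward direction: exhibiting a concrete standard graph when $p \neq q$. My plan is to take the (left) Cayley graph of the monoid $M$ obtained by adjoining an identity to the given semigroup, padding $p$ by an extra letter if necessary to ensure $p \neq 1$ so that $c_1$ and $c_2$ can be distinct vertices. Set $V := M$; interpret each generator $a_i$ as $\{(x, x \cdot a_i) \mid x \in M\}$; identify the constant $c_1$ with $1 \in M$ and $c_2$ with $p \in M$; and add the mandatory standard edges for $\ident$, $\top$, $\bot$, and the constants. A short induction on word length yields $\sem{r(w)}{G} = \{(x, x \cdot w) \mid x \in M\}$, so every defining \tr{} holds. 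Interpreting $l$ as $\{(1, p)\}$ then satisfies the three auxiliary \tr{}s precisely because $p \neq q$ in $M$ forces $(1,p) \in \sem{r(p)}{G} \setminus \sem{r(q)}{G}$. The delicate part is verifying carefully that this construction really is a standard graph satisfying every \tr{} in $\mathcal{T}_{p,q}$: in particular, that the constant axioms are respected by our choices of $c_1$ and $c_2$ as vertices, and that no extra pairs sneak into the interpretation of $l$ to violate $l \sqcap r(q) \sqsubseteq \bot$.
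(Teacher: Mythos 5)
Your reduction is correct, but it takes a genuinely different route from the paper. The paper reduces from emptiness of the intersection of two context-free grammars (following Krisnadhi and Lutz): each production becomes a \tr{} $\sigma_0\fcmp\cdots\fcmp\sigma_k\sqsubseteq n_j$, totality of each terminal is forced by $\ident\sqsubseteq t\fcmp t\converse{}$, a single constant $\epsilon$ marks the empty word, and the conflict \tr{} is $\epsilon\fcmp s_1\fcmp s_2\converse{}\fcmp\epsilon\sqsubseteq\bot$; the witness model when the intersection is empty is the graph whose vertices are the words in $T^*$. You instead reduce from the Markov--Post word problem for finitely presented semigroups: the defining relations become equalities $r(u_i)=r(v_i)$, and two constants plus a fresh label $l$ pin the pair $(c_1,c_2)$ inside $\sem{r(p)}{G}$ and outside $\sem{r(q)}{G}$; your witness when $p\neq q$ is the Cayley graph of $S^1$, on which $\sem{r(w)}{G}=\{(x,x\cdot w)\mid x\in S^1\}$, and with $\sem{l}{G}=\{(c_1,c_2)\}$ all three auxiliary \tr{}s hold, so both directions go through as you sketch (the forward direction only needs that composition is a congruence for equality of relations, which is your associativity/substitution induction). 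The trade-off lies in which operations each proof consumes: the paper's encoding uses converse but neither $\sqcap$ nor $\top$, which is what licenses its closing remark that undecidability already holds for that smaller fragment, whereas yours uses $\top$ and $\sqcap$ but no converse, so it establishes undecidability of the converse-free fragment instead; if you also wanted to avoid $\sqcap$, you could drop $l$ and use $c_1\fcmp\top\fcmp c_2\sqsubseteq r(p)$ together with $c_1\fcmp r(q)\fcmp c_2\sqsubseteq\bot$. One small point to tighten: drop the idea of ``padding $p$''---altering $p$ would change the word-problem instance, and it is never needed, since the identity of $S^1$ is freshly adjoined while the image of the non-empty word $p$ lies in $S$, so $c_1$ and $c_2$ automatically name distinct vertices, exactly as the standard-graph requirement $\sem{c}{G}=\{(c,c)\}$ demands.
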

\begin{proof}
This proof closely follows a proof by Krisnadhi and Lutz~\cite{Krisnadhi:2007aa} on `conjunctive query answering'.
We use a reduction from the undecidable problem whether two context free grammars have an empty intersection.
This problem is given by two grammars with non-terminals $N_1$ and $N_2$, a common set of terminals $T$, and production rules $P_i \subseteq N_i\times(N_i \cup T)^*$ with $i\in\{1,2\}$.
The sets $N_1, N_2$ and $T$ are mutually disjoint.
The question to be answered is whether there exists a sequence of terminals that is generated by the two starting nodes $s_i \in N_i$.

We make an encoding by choosing $\mathcal{C},\mathcal{L}$ and $\mathcal{T}$ such that there is a standard graph in which the \tr{}s in $\mathcal{T}$ hold if and only if the context free grammars have an empty intersection.
We encode every terminal and nonterminal with corresponding relation symbols, and use the constant symbol $\epsilon$ for the empty word: $\mathcal{C} = \{\epsilon\}$ and $\mathcal{L} = \{\epsilon,\top,\bot,\ident{}\} \cup N_1 \cup N_2 \cup T$.
For $\mathcal{T}$ we use a \tr{} for each production rule, one for each terminal, and a final \tr{} that requires the two grammars to have an empty intersection:
\begin{multline*}
\mathcal{T} = \left\{\sigma_0 \fcmp \cdots \fcmp \sigma_k \sqsubseteq n_j \mid (n_j,\sigma_0\cdots\sigma_k) \in P_1 \cup P_2 \right\}
\cup
\left\{\ident{} \sqsubseteq t \fcmp t\converse{} \mid t \in T \right\} \\
\cup
\left\{\epsilon \fcmp s_1 \fcmp s_2\converse \fcmp \epsilon \sqsubseteq \bot \right\}
\end{multline*}

We show that there is a standard graph in which $\mathcal{T}$ holds if the grammars have an empty intersection, and there is no such graph if the grammars share a word.
First suppose the grammars have an empty intersection.
We construct a graph as follows:
The vertices are words over $T$ where $\epsilon$ is the empty word.
There are edges $(t,u,u t)$ for each non-terminal $t\in T$, edges $(n,u,u p)$ if the word $p$ is a valid parse of $n \in N_1 \cup N_2$, and edges to make the graph standard:
\begin{multline*}
E = \left\{(t,u,u t) \mid u \in T^*,\ t \in T\right\} \cup \{(\epsilon,\epsilon,\epsilon)\} \cup \{(\ident{},u,u) \mid u\in T^*\}  \\ \cup \{(\top,u,v) \mid u,v\in T^*\}
 \cup \left\{(n,u,u p) \mid u,p \in T^*,\ n \in N_1 \cup N_2,\ n\text{ parses }p\right\}
\end{multline*}
It can be checked that $G=(\mathcal{L},T^*,E)$ is standard and all \tr{}s in $\mathcal{T}$ hold.
In particular, $\sem{s_1 \fcmp s_2}G = \{\}$ as there is no word $p$ such that $s_1$ parses $p$ and $s_2$ parses $p$.

Now suppose for a proof by contradiction that $G=(\mathcal{L},V,E)$ is a standard graph in which all \tr{}s in $\mathcal{T}$ hold, and that $w = t_0 \cdots t_{n-1}$ is a word that is parsed by $s_1$ and $s_2$.
Since $\mathcal{T}$ holds, there is a path in $G$ with vertices $\epsilon=v_0,\ldots,v_n$ and edges $(t_0,v_0,v_1), \ldots, (t_{n-1},v_{n-1},v_n)$.
By induction on the parse-tree of how $s_1$ parses $w$, there is an edge $(s_1,v_0,v_n)\in E$.
Similarly, $(s_2,v_0,v_n)\in E$.
Since $\epsilon = v_0$ and $G$ is standard, $(\epsilon,\epsilon) \in \sem{\epsilon\fcmp s_1\fcmp s_2\converse\fcmp\epsilon}G$.
Since all \tr{}s in $\mathcal{T}$ hold, $(\epsilon,\epsilon) \in \sem{\bot}G$, contradicting that $G$ is standard.
\qed
\end{proof}

Undecidability of entailment follows as a corollary: there are no standard graphs in which every sentence in $\mathcal{T}$ holds iff the sentence $\top = \bot$ is entailed.

We end with a final remark about the proof we presented.
The relation symbol $\top$ and the operation $\sqcap$ were not used in the proof.
Thus, this proof of undecidability holds if $\mathcal{T}$ is restricted to \tr{}s of a simpler shape.
By application of Lemma~\ref{lem:existsgraph}, we conclude that deciding whether a conflict free consequence graph under $\mathcal{R}$ exists is undecidable.
Equivalently, given $\mathcal{R}$, it is undecidable to determine whether any least consequence graph under $\mathcal{R}$ contains a conflict.

\section{Conclusion}\label{sec:conclusion}
In this paper, we have given a translation of \tr{}s into \gr{}s, and have proven that for a graph $G$, a \tr{} is maintained in $G$ if and only if the translated \gr{} holds in $G$.
Furthermore, when allowing the \tr{}s to use extra relation-symbols with a dedicated meaning, we can add corresponding \gr{}s to ensure that this dedicated meaning is preserved.
In addition, we showed that there exists a least consequence graph of a set of \gr{}s, which one may be able to find through a semi-decision procedure.
This procedure also allows us to determine whether a set of \tr{}s is consistent.
Finally, we have shown that in a sense, we cannot do better than to give a semi-decision procedure:
The problem of whether a set of \tr{}s has a standard graph in which all \gr{}s hold is undecidable.

Our procedure can partially automate preserving invariants in information systems.
Its implementation and evaluation is foreseen in Ampersand, but considered outside the scope of this paper.


\paragraph*{Acknowledgements}
We thank Wolfram Kahl and Stef Joosten for their thoughts and comments on this paper and earlier versions.
Part of the research presented in this paper was performed at the University of Innsbruck, Austria, supported by the Austrian Science Fund (FWF) project~Y757.
Supported by The Netherlands Organisation for Scientific Research (NWO) project 639.023.710.

\bibliographystyle{splncs03}
\bibliography{Amperspiegel} 

\end{document}